\def\urlprefix{}
\NewDocumentCommand{\Rowvec}{ O{,} m }
 {
  \vector_main:nnnn { p } { & } { #1 } { #2 }
 }
\NewDocumentCommand{\Colvec}{ O{,} m }
 {
  \vector_main:nnnn { p } { \\ } { #1 } { #2 }
 }
\DeclareMathAlphabet{\mathpzc}{OT1}{pzc}{m}{it}
\newcommand\nctwo{\stackrel{\mathclap{\normalfont\mbox{\tiny (1)}}}{\leq}}
\newcommand\ncthree{\stackrel{\mathclap{\normalfont\mbox{\tiny (2)}}}{=}}
\newcommand{\nbox}[2][9]{\hspace{#1pt} \mbox{#2} \hspace{#1pt}}
\newcommand{\pave}{p_{\textnormal{ave}}}
\newtheorem{lem}{Lemma}
\newtheorem{cor}{Corollary}
\DeclareMathOperator{\tr}{tr}
\DeclareMathOperator{\spec}{spec}
\DeclareMathOperator{\rank}{rank}
\begin{document}
\title{Quantifying incompatibility of quantum measurements through non-commutativity}
\author{Krzysztof Mordasewicz}
\email{k.mordasewicz@student.uw.edu.pl}
\author{J\k{e}drzej Kaniewski}
\email{jkaniewski@fuw.edu.pl}
\affiliation{Faculty of Physics, University of Warsaw, Pasteura 5, 02-093 Warsaw, Poland}
\date{\today}
\begin{abstract}
The existence of incompatible measurements, i.e.~measurements which cannot be performed simultaneously on a single copy of a quantum state, constitutes an important distinction between quantum mechanics and classical theories. While incompatibility might at first glance seem like an obstacle, it turns to be a necessary ingredient to achieve the so-called quantum advantage in various operational tasks like random access codes or key distribution. To improve our understanding of how to quantify incompatibility of quantum measurements, we define and explore a family of incompatibility measures based on non-commutativity. We investigate some basic properties of these measures, we show that they satisfy some natural information-processing requirements and we fully characterize the pairs which achieve the highest incompatibility (in a fixed dimension). We also consider the behavior of our measures under different types of compositions. Finally, to link our new measures to existing results, we relate them to a robustness-based incompatibility measure and two operational scenarios: random access codes and entropic uncertainty relations.
\end{abstract}

\maketitle
\tableofcontents

\section{Introduction}
Measurement incompatibility constitutes an important distinction between quantum mechanics and classical physics.
The earliest mention of incompatibility might perhaps be identified with the Heisenberg uncertainty principle, which states that it is impossible to simultaneously measure the position and momentum of a quantum particle with arbitrary precision~\cite{HW1927}. A further development by Robertson clearly shows that this uncertainty arises as a consequence of two observables failing to commute~\cite{Commute_first}, which constitutes a direct link to the concept of incompatibility.

While incompatibility might seem like an obstacle, in many information-processing tasks it turns out to be a prerequisite in order to achieve a better-than-classical performance. Among tasks of predominantly foundational relevance this holds for Einstein--Podolski--Rosen steering \cite{EPR_first, EPR_Bell_nonlocality} or Bell nonlocality \cite{Bell_nonlocality_first, EPR_Bell_nonlocality, Nonlocality_review}.
The same is true for communication problems. In a quantum random access code (QRAC), where we want to encode two classical dits into a single qudit, it is necessary to use incompatible quantum measurements in order to beat the performance of the optimal classical strategy. Similarly, security of a quantum key distribution (QKD) protocol cannot be guaranteed unless incompatible measurements are used.

It is intuitively clear that in all the scenarios mentioned above in order to achieve high performance (e.g.~large Bell violation, high QRAC performance, high security against the eavesdropper) one should be using highly incompatible measurements. For certain special classes of measurements, e.g.~pairs of rank\nobreakdash-1 projective measurements on a qubit, one can in fact formalize this intuition through rigorous quantitative relations. The obstacle we face when attempting to generalize such results to more complex scenarios is the lack of a natural and unique method to quantify how incompatible a pair of measurements is. This motivates the investigation of incompatibility measures. Let us now describe some qualitative notions of compatibility found in literature and quantitative measures based on them. Note that the list below is not meant to be exhaustive. For more details we refer the reader to works on layers of classicality, e.g.~Refs.~\cite{PhysRevA.93.042118, PhysRevA.104.022206}.

The least restrictive (in the sense of having the largest set of compatible measurements) notion of compatibility found in literature is \textbf{coexistence} \cite{Coexistance_first}. A pair of measurements is coexistent if and only if there exists a single parent measurement such that every measurement operator from the original pair can be obtained by adding up a subset of measurement operators of the parent measurement. Coexistence does not seem to be a popular research topic and we are not aware of any attempts to use this notion to construct a quantitative measure.

A slightly more restrictive notion is that of \textbf{joint measurability}. We say that two measurements are jointly measurable, if there exists a single parent measurement from which both original measurements can be recovered through classical postprocessing.
The set of jointly measurable pairs of measurements acting on $\mathbb{C}^{d}$ is a convex set and therefore some of the resulting measures can be computed efficiently.

Incompatibility measures based on joint measurability can be constructed using standard tools of convex geometry. Suppose we are given two convex sets $A, B$ such that $A \subsetneq B$ and our goal is to quantify the distance between a point $x \in B \setminus A$ and the set $A$.\footnote{Analogous approach is often used in the context of resource theories where $A$ is the set of free (resource-less) objects and the distance of $x$ from $A$ quantifies how resourceful $x$ is.} Convex geometry provides two standard approaches to this problem which we refer to as the weight approach and the robustness approach. In the weight approach we search for a convex decomposition of $x$ into a single point from $A$ and a single point from $B$ which assigns the largest weight to the point from $A$. In the robustness approach we quantify how difficult it is to ``push'' $x$ inside set $A$ by taking a convex combination with some point in $B$. 
These two approaches have been successfully used to quantify various quantum resources, e.g.~entanglement~\cite{Entanglement_first, Entanglement_review}, steering~\cite{Steering_first_robustness,Steering_first_weight, Steering_review} or nonlocality~\cite{Nonlocality_first,Nonlocality_review}. Since these problems can often be cast as convex optimisation problems, we can study them analytically. Moreover, if the convex sets $A$ and $B$ admit compact descriptions, it might be possible to efficiently compute the resulting quantities.

Applying this standard approach to incompatibility gives rise to two families of incompatibility measures: incompatibility weight~\cite{IW} and incompatibility robustness~\cite{IR}.
Unfortunately, even though these measures are elegant from the mathematical point of view, they do not fully capture our physical intuition. The main drawback of the incompatibility weight is its poor performance for certain important classes of measurements. More specifically, if we consider a pair of rank-1 measurements, then they are either compatible or maximally incompatible.\footnote{Since rank-1 measurements cannot be decomposed in a non-trivial manner, in this case the incompatibility weight only takes the extreme values, i.e.~$0$ or $1$.}
Incompatibility robustness, on the other hand, requires us to define the noise set (denoted by $B$ in the explanation above). In the modern literature on incompatibility there are five distinct choices of the noise set and not only do these choices lead to different measures, but they even lead to a different ordering of measurements~\cite{DFK19}. In fact, for just one of the measures we can give an example of the most incompatible measurements in arbitrary dimension, but even then we do not have a full characterization of the most incompatible measurements. For the other measures we do not know any examples of the most incompatible measurements and for most of them we are not even aware of any reasonable conjectures.

Another notion of compatibility found in literature is that of \textbf{non-disturbance}~\cite{ND}. This notion has a well-defined operational meaning: we say that measurements $E$ and $F$ are non-disturbing if it is possible to perform $E$ without affecting the statistics of the subsequent measurement $F$ on a single copy of the system. Clearly, this definition implies that $E$ and $F$ are not treated on equal footing and indeed the notion of non-disturbance, unlike all the other notions considered in this work, is not symmetric \cite{ND}. This asymmetry is a natural consequence of the definition, but it shows that non-disturbance should not be compared with the other notions of incompatibility.\footnote{Nevertheless, for the special case of projective measurements on a qubit there exists a quantitative measure of non-disturbance, which turns out to be closely related to the incompatibility measures defined in this work (see Section V.A of Ref.~\cite{ND}).}

The most restrictive, and arguably the simplest notion of compatibility corresponds to \textbf{commutativity} of the measurement operators. The set of commuting measurements is not jointly convex,
but it is easy to check, whether a pair of measurements commutes.\footnote{It is worth pointing out that an interesting connection between joint measurability of measurements and commutativity of their Naimark dilations has been proven for pairs~\cite{BENEDUCI2017197} and larger sets of measurements~\cite{mitra2021characterizing}.} Even though, as mentioned at the beginning, the phenomenon of non-commutativity goes back to the early days of quantum mechanics, we are not aware of any attempts at quantifying it, which is precisely the main focus of this work.

In this work we propose a family of measures based on non-commutativity and we study their properties. We show that they exhibit some natural properties that an incompatibility measure should satisfy, e.g.~they are invariant under unitaries and they do not increase under post-processing.
We compute the largest value of these measures in a fixed dimension and we fully characterize the pairs of measurements that achieve it, which leads to a natural generalisation of mutually unbiased bases.
We also investigate the behavior of the measures under different types of compositions.
Finally, we show that any measurements which are maximally incompatible according to our measures must be maximally incompatible according to one of the robustness-based incompatibility measures, but we do not know whether the reverse statement holds. For the special case of rank-1 projective measurements we derive a quantitative relation between commutation-based incompatibility and two operational scenarios: performance in a QRAC and entropic uncertainty relations.

In Section~\ref{sec:preliminaries} we define some standard mathematical concepts used throughout the paper.
In Section~\ref{sec:inc_measure_def_prop} we define a new family of incompatibility measures, study its properties and fully characterize the most incompatible measurements.
In Section~\ref{sec:comparison_exist_results} we compare the new family with previously used incompatibility measures and we investigate its usefulness for operational tasks.

\section{Preliminaries}
\label{sec:preliminaries}
\subsection{Vectors and operators}
For a vector $v \in \mathbb{C}^d$  for $p\in [1,\infty]$ we define the vector $p$-norm as:
\begin{align*}
    \| v \|_p := 
    \begin{cases}
      \left( \sum_{j} |v_j|^p \right)^\frac{1}{p}, \; &\text{for} \; p \in [1,\infty[,\\
      \max_j |v_j|, \; &\text{for} \; p = \infty,
    \end{cases}
\end{align*}
where $v_{j}$ denotes the $j$-th component of $v$, while the summation and the maximum goes over $j \in \{1, 2, \ldots, d\}$.

Let $A : \mathbb{C}^d \to \mathbb{C}^d$ be a linear operator and let $A^{\dagger}$ be its hermitian conjugate. We denote the identity operator acting on $\mathbb{C}^d$ by $\mathbb{1}_{d}$ although we omit the subscript where the dimension is clear from the context. A complex number $\lambda$ is an eigenvalue of $A$ if $\det(A- \lambda \mathbb{1}) = 0$, where $\det(\cdot)$ denotes the determinant.
The set of eigenvalues is called the spectrum of $A$ and is denoted by $\spec(A)$. 
We say that $A$ is normal if it satisfies $[A,A^\dagger]=0$ and then the rank of $A$, denoted by $\rank(A)$, equals the number of non-zero eigenvalues (including multiplicities).
Every normal operator $A$ can be written in its spectral decomposition: $A = \sum_j \lambda_j P_j$, where $\lambda_j$ is an eigenvalue of $A$ and $P_j$ is the corresponding eigenprojector.

If all eigenvalues of a normal operator $A$ are real, i.e.~$\lambda \in \mathbb{R}$, which is equivalent to $A^\dagger = A$, we say that $A$ is hermitian. If all eigenvalues are also non-negative, i.e.~$\lambda \geq 0$, we say that $A$ is positive semi-definite, written as $A \geq 0$. If all eigenvalues of a normal operator are purely imaginary (equivalent to $A^\dagger = -A$), we say that $A$ is anti-hermitian.

For a hermitian operator $A$ we define its positive and negative parts as follows: $A^+ := \sum_{j : \lambda_{j} > 0} \lambda_{i} P_{j}$ and $A^- := \sum_{j : \lambda_{j} < 0} \lambda_{j} P_{j}$. For a normal operator we define the absolute value of $A$ as $|A| := \sum_i |\lambda_i| P_j$ and clearly $|A| \geq 0$. It is easy to see that for hermitian operators we have $|A| = A^+ - A^-$.

Every linear operator $A : \mathbb{C}^d \to \mathbb{C}^d$ can be written in its singular value decomposition. The singular value decomposition uniquely determines the singular values of $A$, which we denote by $\{ \sigma_{j} \}$. The Schatten $p$-norm of $A$ is defined as
\begin{align*}
    \| A \|_p :=
    \begin{cases}
      \big( \sum_j \sigma_j^p \big)^\frac{1}{p}, \; &\text{for} \; p \in [1,\infty[,\\
      \max_j \, \sigma_j, \; &\text{for} \; p=\infty,
    \end{cases}
\end{align*}
where the summation and the maximum goes over all the singular values of $A$ (including multiplicities). Note that the Schatten $p$-norm of $A$ is simply the vector $p$-norm of the vector of singular values of $A$.

We say that two norms, $\|\cdot\|_{p}$ and $\|\cdot\|_{q}$, are dual to each other if either $p = 1, q = \infty$ or $p = \infty, q = 1$ or $\frac{1}{p} + \frac{1}{q} = 1$. This allows us to write the Schatten $p$-norm as an optimization problem:
\begin{align}
\label{eq:schatten-opt}
    \| A \|_p := \max \left\{ | \braket{X,A} | \, : \, \| X \|_q \leq 1 \right\},
\end{align}
where $\|\cdot\|_{p}$ and $\|\cdot\|_{q}$ are dual norms and $\braket{X,Y} := \tr( X^\dagger Y)$ is the Hilbert--Schmidt inner product (see Lecture 2 of Ref.~\cite{TQI_notes}). In fact, it is easy to see that the optimal operator $X$ must satisfy $\| X \|_q = 1$.

\subsection{Convex sets and concave functions}
\label{sec:concave_functions}
Let $S$ be a convex subset of $\mathbb{R}^{n}$. A point $x \in S$ is called an extremal point of $S$ if it cannot be written as a non-trivial convex combination of some other points from $S$. Let us now present a complete characterisation of the extremal points of a symmetric polytope related to the probability simplex.

\begin{lem}
\label{lem:extremal_points}
Let $n \geq 2$ be an integer and $s, t$ be positive real numbers satisfying $t n \geq s$. Let $S$ be a subset of $\mathbb{R}^{n}$ defined as
\begin{align*}
S := \Big\{ x \in \mathbb{R}^{n} : 0 \leq x_{j} \leq t \; \forall j \nbox{and} \sum_{j = 1}^{n} x_{j} = s \Big\}.
\end{align*}
Then, $S$ forms a non-empty convex set and, moreover, its extremal points are permutations of the vector $u$ given by
\begin{align*}
u_{j} =
\begin{cases}
t &\nbox{for} j \leq \lfloor \frac{s}{t} \rfloor,\\
s - \lfloor \frac{s}{t} \rfloor \cdot t &\nbox{for} j = \lfloor \frac{s}{t} \rfloor + 1,\\
0 &\nbox{otherwise.}
\end{cases}
\end{align*}
\end{lem}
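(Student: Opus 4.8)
The plan is to separate the routine part (non-emptiness and convexity) from the characterisation of extremal points, which carries all the content. For convexity I would observe that $S$ is the intersection of the hyperplane $\sum_{j} x_{j} = s$ with the box $[0,t]^{n}$, both convex, so $S$ is a (bounded) polytope. Non-emptiness follows by exhibiting the point with all coordinates equal to $s/n$: its entries lie in $[0,t]$ precisely because the hypothesis $tn \geq s$ gives $s/n \leq t$, and their sum is $s$ by construction.

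For the extremal points I would use the standard perturbation criterion: a point $x \in S$ fails to be extremal if and only if there exist a nonzero $v \in \mathbb{R}^{n}$ and an $\epsilon > 0$ with $x + \epsilon v \in S$ and $x - \epsilon v \in S$. Remaining on the hyperplane forces $\sum_{j} v_{j} = 0$, while the box constraints force $v_{j} = 0$ at every coordinate with $x_{j} \in \{0,t\}$; hence $v$ may be supported only on coordinates where $0 < x_{j} < t$. If two such coordinates $j_{1}, j_{2}$ exist, the direction given by the difference of standard basis vectors $v = e_{j_{1}} - e_{j_{2}}$ is an admissible nonzero perturbation, so $x$ is not extremal. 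Conversely, if at most one coordinate is strictly interior, the only admissible $v$ is $0$. This yields the clean criterion: $x$ is extremal if and only if at most one of its coordinates lies strictly between $0$ and $t$.

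It then remains to identify such points using the sum constraint. Writing $k := \lfloor s/t \rfloor$ and $r := s - kt$, so that $0 \leq r < t$, I would split into two cases. If every coordinate equals $0$ or $t$, letting $a$ be the number of entries equal to $t$ gives $at = s$, which forces $a = s/t = k$ and $r = 0$; the point is then a permutation of $u$. If exactly one coordinate $c$ satisfies $0 < c < t$, then $at + c = s$ gives $c = s - at$, and the requirement $0 < c < t$ pins down $a = k$ and $c = r$, so again the point is a permutation of $u$; here the inequality $s \leq tn$ guarantees $k \leq n-1$, so the interior index $k+1$ is available. The converse inclusion is immediate, since $u$ has at most one interior coordinate and manifestly lies in $S$.

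The main point requiring care will be the treatment of the boundary values. When $s/t$ is an integer the ``remainder'' entry vanishes and the two cases merge into the single vector with $k$ copies of $t$ and the rest $0$, and when $s = tn$ one has $k = n$ so that $u$ is the all-$t$ vector and the index $k+1$ in the definition of $u$ is vacuous. I would check that the stated formula reproduces the correct vertex in each of these degenerate situations, confirming that the characterisation is uniform across the whole admissible range $tn \geq s$.
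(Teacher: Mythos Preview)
Your argument is correct and follows essentially the same route as the paper's sketch: both identify non-extremality via a perturbation supported on two coordinates lying strictly in $(0,t)$, and then use the sum constraint to pin down the remaining candidates as permutations of $u$. Your treatment of the converse (showing directly that any admissible perturbation must vanish when at most one coordinate is interior) is slightly more streamlined than the paper's suggestion of checking that no permutation of $u$ is a convex combination of the others, but the underlying idea is the same.
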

\noindent As this lemma is rather straightforward, let us just sketch the argument. It is easy to see that points which have two (or more) components in the interior $(0, t)$ cannot be extremal (introducing small variations to these components leads to a non-trivial convex decomposition). Once we know that there is at most one component taking some intermediate value, we are left with precisely the points presented above. This shows that the extremal points of $S$ are contained in the permutations of $u$. To show that the sets are actually the same it suffices to prove that no permutation of $u$ can be written as a convex combination of all the other permutations of $u$, which is a simple exercise.

In the second part of this section, let us state a couple of well-known facts about concave functions. Let $S$ be a convex subset of $\mathbb{R}^{n}$. We say that a function $f : S \to \mathbb{R}$ is concave if for all $x, y \in S$ and all $p \in [0, 1]$ we have
\begin{align*}
f( p x + (1 - p) y ) \geq p f(x) + (1 - p) f(y).
\end{align*}
We say that a function is strictly concave if the equality holds only in the trivial cases, i.e.~$p \in \{0, 1\}$ or $x = y$. Concave functions are convenient to work with because they satisfy Jensen's inequality, i.e.~for an arbitrary choice of $x_{j} \in S$ and an arbitrary choice of non-negative weights $p_{j}$ that add up to unity we have
\begin{align}
\label{eq:Jensen}
\sum_{j} p_{j} f (x_{j} ) \leq f \Big( \sum_{j} p_{j} x_{j} \Big).
\end{align}
This means that if we want to obtain an upper bound on the left-hand side of Eq.~\eqref{eq:Jensen}, it suffices to compute the expectation value over the arguments. On the other hand, if our goal is to minimize $f(x)$ over some convex and compact set $S$, then it is sufficient to perform the minimisation over the extremal points of $S$ (every non-extremal point can be decomposed into extremal points and at least one of them will give a value which is at least as good as the original non-extremal point).

Finally, let us mention a connection between concave functions and majorization. Let $y, z \in \mathbb{R}^{n}$ be a pair of real vectors with non-negative coefficients. We say that $y$ majorizes $z$, denoted by $y \succ z$, if for all $k \in \{1, 2, \ldots, n\}$ we have:
\begin{align}
\label{eq:majorization-condition}
\sum_{j = 1}^{k} y^{\downarrow}_{j} \geq \sum_{j = 1}^{k} z^{\downarrow}_{j},
\end{align}
where the vector $y^{\downarrow}$ ($z^{\downarrow}$) has the same components as $y$ ($z$) but sorted in descending order. Let $f(x) : \mathbb{R} \to \mathbb{R}$ be a concave function and let $g : \mathbb{R}^{n} \to \mathbb{R}$ be defined as
\begin{align*}
g(y) := \sum_{j = 1}^{n} f(y_{j}).
\end{align*}
Then, $g$ is a Schur-concave function, i.e.~for a pair of vectors satisfying $y \succ z$, we have $g(y) \leq g(z)$.
\subsection{Quantum measurements}
In this section we introduce the concept of a quantum measurement and explain some transformations that might be applied to it.

In quantum mechanics the most general class of measurements is known as positive-operator valued measures (POVMs). 
A measurement with $n$ outcomes acting on a $d$-dimensional space is defined by a set of $n$ positive semi-definite operators acting on $\mathbb{C}^{d}$, which we denote by $ \{ {E}_a \}_{a=1}^{n}$, satisfying $\sum_{a=1}^n E_a = \mathbb{1}$ and we will refer to this measurement as $E$.
According to the Born rule if we perform this measurement on a state $\rho$, the probability of observing outcome $a$ is given by $\tr( E_{a} \rho )$.
In this work we assume that every outcome corresponds to a non-zero measurement operator, i.e.~$E_a \neq 0$ for all $a$.

Let us define two special classes of measurements:
\begin{itemize}
    \item A measurement is \textbf{projective} if every measurement operator is a projector: $E_{a}^{2} = E_{a}$. Then, the number of outcomes cannot be larger than the dimension: $n \leq d$.
    \item A measurement is \textbf{rank-1} if every measurement operator is rank-1: $\rank(E_a) = 1$. Then, the number of outcomes cannot be smaller than the dimension: $n \geq d$.
\end{itemize}

A measurement which is simultaneously projective and rank-1 is sometimes referred to as a \textbf{measurement in a basis}, because its measurement operators must be of the form $E_{a} = \ket{a}\bra{a} $, where $\{ \ket{a} \}_{a = 1}^{d}$ forms an orthonormal basis of $\mathbb{C}^{d}$. For such measurements the number of outcomes equals the dimension of the space.

Let $\{ \ket{e_j} \}$ and $\{ \ket{f_k} \}$ be two orthonormal bases on $\mathbb{C}^d$. We say that these bases are mutually unbiased if
\begin{align*}
    |\braket{e_j|f_k}|^2 = \frac{1}{d}, \; \; \; \forall j,k \in \{1, \dots, d\}.
\end{align*}
The standard example of mutually unbiased bases (MUBs) for qubits is given by $\{ \ket{0}, \ket{1} \}$ and $\{ \frac{1}{\sqrt{2}}(\ket{0} + \ket{1}), \frac{1}{\sqrt{2}}(\ket{0} - \ket{1})\}$.

Having discussed what a measurement is, let us now describe two ways in which a measurement can be transformed.

\textbf{Post-processing} refers to classically processing the outcomes of a quantum measurement. If $P_{a' | a}$ is the probability that the initial outcome $a$ results in the final outcome $a'$, then the post-processing is described by a set of real numbers satisfying  $P_{a'|a} \geq 0$ and $\sum_{a'} P_{a'|a} = 1$. Clearly, the obtained statistics are equivalent to those obtained by performing a measurement $\widetilde{E}$, where
\begin{align}
\label{eq:def_post-processing}
  \widetilde{E}_{a'} = \sum_{a} P_{a'|a} E_{a}. 
\end{align}
Post-processing might change the number of outcomes, but it preserves the dimension on which the measurement acts.

\textbf{Pre-processing} refers to putting a quantum state through a quantum channel (i.e.~a completely positive trace-preserving map) before feeding it into the measurement device.
Let $\rho$ be a density matrix acting on dimension $d'$ and let $\Lambda : L(\mathbb{C}^{d'}) \xrightarrow{} L(\mathbb{C}^{d})$ be a quantum channel given by
\begin{align*}
    \Lambda(\rho) = \sum_j K_j \rho K_j^\dagger,
\end{align*}
where $\{ K_j \}$ are $d \times d'$ operators satisfying
\begin{align*}
    \sum_j K_j^\dagger K_j = \mathbb{1}_{d'}.
\end{align*}
Performing a measurement $E$ on $\Lambda(\rho)$ is equivalent to performing a measurement $F$ given by
\begin{align*}
F_{a} := \Lambda^\dagger(E_{a}) \text{ for} \; \Lambda^\dagger (A) = \sum_j K_j^\dagger A K_j,
\end{align*}
on the original state $\rho$. We say that the measurement $F$ is a pre-processing of the original measurement $E$. Note that pre-processing might change the dimension on which the measurement acts but it preserves the number of outcomes.

\subsection{Technical lemmas}
In this section we present some technical lemmas that we will use throughout this work.
Let us start with the triangle inequality for Schatten norms. For an arbitrary pair of operators $A$ and $B$ acting on $\mathbb{C}^d$ and any Schatten $p$-norm it holds that
\begin{align*}
    \| A + B \|_p \leq \|A\|_p + \|B\|_p.
\end{align*}
In the following lemma we completely characterize the cases in which the triangle inequality holds as equality for hermitian operators $A$ and $B$.
\begin{lem}
\label{lem:tri_ineq_sat}
For hermitian operators $A$ and $B$ the triangle inequality holds as equality, i.e.
\begin{align}
\label{eq:opt_triangle_ineq}
    \| A + B \|_p = \|A\|_p + \|B\|_p,
\end{align}
if and only if there exists a hermitian operator $X$ satisfying $\| X \|_q \leq 1$,
which is optimal for both $A$ and $B$, i.e.
\begin{align}
    \label{eq:norm-saturation}
    \braket{X, A} = \| A \|_p \, \textrm{ and } \, \braket{X,B} = \| B \|_p.
\end{align}
\end{lem}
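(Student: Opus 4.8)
The plan is to prove both implications directly from the dual characterisation of the Schatten norm in Eq.~\eqref{eq:schatten-opt}, supplemented by the observation that when the operator being normed is hermitian the optimiser $X$ may be taken hermitian as well. The sufficiency direction is immediate. Assuming a hermitian $X$ with $\|X\|_q \leq 1$ satisfying Eq.~\eqref{eq:norm-saturation}, linearity of the Hilbert--Schmidt inner product gives $\braket{X, A + B} = \braket{X,A} + \braket{X,B} = \|A\|_p + \|B\|_p$, a non-negative quantity, so that $|\braket{X, A+B}| = \|A\|_p + \|B\|_p$. Since $X$ is admissible in the maximisation defining $\|A+B\|_p$, this yields $\|A+B\|_p \geq \|A\|_p + \|B\|_p$, and combined with the (always valid) triangle inequality we obtain Eq.~\eqref{eq:opt_triangle_ineq}.

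For the necessity direction I would begin with an operator $X_0$ optimal for $A+B$, i.e.~$\|X_0\|_q = 1$ and, after absorbing a global phase, $\braket{X_0, A+B} = \|A+B\|_p \geq 0$. The crucial preliminary step is to replace $X_0$ by a hermitian optimiser. Writing $X_0 = X_H + i X_A$ with $X_H = \tfrac{1}{2}(X_0 + X_0^\dagger)$ and $X_A = \tfrac{1}{2i}(X_0 - X_0^\dagger)$ both hermitian, the hermiticity of $A+B$ guarantees that $\braket{X_H, A+B}$ and $\braket{X_A, A+B}$ are real, so $\braket{X_0, A+B} = \braket{X_H, A+B} - i \braket{X_A, A+B}$. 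Reality of the optimal value forces the anti-hermitian contribution to vanish and leaves $\braket{X_H, A+B} = \|A+B\|_p$, while the triangle inequality together with $\|X_0^\dagger\|_q = \|X_0\|_q$ gives $\|X_H\|_q \leq \tfrac{1}{2}(\|X_0\|_q + \|X_0^\dagger\|_q) = \|X_0\|_q \leq 1$. Hence $X := X_H$ is a hermitian optimiser for $A+B$.

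With such a hermitian $X$ in hand, the final step is a pinching argument. Expanding $\|A+B\|_p = \braket{X, A+B} = \braket{X,A} + \braket{X,B}$ and invoking the duality bounds $\braket{X,A} \leq \|A\|_p$ and $\braket{X,B} \leq \|B\|_p$, each a consequence of $\|X\|_q \leq 1$ and Eq.~\eqref{eq:schatten-opt}, produces the chain $\|A+B\|_p \leq \|A\|_p + \|B\|_p$. Under the hypothesis that Eq.~\eqref{eq:opt_triangle_ineq} holds the two ends coincide, which forces both intermediate inequalities to be equalities and therefore delivers exactly Eq.~\eqref{eq:norm-saturation}. I expect the only genuine obstacle to be the hermiticity reduction of $X_0$; once that is settled, both directions reduce to norm duality and a squeeze of inequalities.
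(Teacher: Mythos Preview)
Your proof is correct and follows essentially the same approach as the paper's: both obtain a hermitian optimiser for $A+B$ by symmetrising a dual optimiser (your $X_H = \tfrac{1}{2}(X_0 + X_0^\dagger)$ is exactly the paper's $\tfrac{1}{2}(Z + Z^\dagger)$), and then both conclude via the squeeze $\braket{X,A} + \braket{X,B} = \|A\|_p + \|B\|_p$ with $\braket{X,A} \leq \|A\|_p$, $\braket{X,B} \leq \|B\|_p$. Your explicit treatment of the sufficiency direction is a small addition, as the paper dismisses it as trivial.
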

\begin{proof}
The non-trivial direction is to show that Eq.~\eqref{eq:opt_triangle_ineq} implies the existence of an operator $X$ satisfying the conditions stated above.
Let us start by showing that for every hermitian operator $H$ we can always find another hermitian operator $X$ such that $\braket{X,H} = \| H \|_p$ and $\| X \|_q \leq 1$.
The definition given in Eq.~\eqref{eq:schatten-opt} implies that there always exists some matrix $Y$ satisfying $\|Y\|_{q} = 1$ such that $|\braket{Y,H}| = \| H \|_p$.
If we now omit the absolute value, we conclude that $\braket{Y,H} = \| H \|_p \cdot e^{i \phi}$ for some phase $\phi \in [0, 2\pi[$. Then we construct a new matrix $Z := Y \cdot e^{i \phi}$, so that $\braket{Z,H} = \| H \|_p$.
Finally, we construct $X$ by symmetrizing $Z$: $X := \frac{Z + Z^\dagger}{2}$.
Since $\|Z\|_q \leq 1$ and $\|Z^\dagger \|_q \leq 1$, we immediately conclude that $\|X\|_q \leq 1$. Since $H$ and $X$ are hermitian, we deduce that
\begin{align*}
    \braket{X,H} = \frac{1}{2} \big(
    \braket{Z,H} +
    \braket{Z^\dagger,H}
    \big) = \braket{Z,H} = \| H \|_p.
\end{align*}
If we now set $H = A+B$, we conclude that there exists a hermitian operator $X$ such that $\| A+B \|_p = \braket{X,A+B} = \braket{X,A} + \braket{X,B}$. Each term on the right-hand side can be upper bounded by the norm, so if the triangle inequality holds as equality, then $X$ must satisfy the conditions given in Eq.~\eqref{eq:norm-saturation}.
\end{proof}
\begin{lem}
\label{lem1}
Let $\|\cdot\|_{p}$ and $\|\cdot\|_{q}$ be dual norms. Let $A$ and $X$ be hermitian operators, such that $\| X \|_q = 1$ and $\langle A,X \rangle = \| A \|_p$. Then the positive (negative) part of $A$ is orthogonal to the negative (positive) part of $X$, i.e.:
\begin{align*}
  A^{+} X^{-} = A^{-} X^{+} = 0.
\end{align*}
\end{lem}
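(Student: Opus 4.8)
The plan is to sandwich the quantity $\langle A, X\rangle = \tr(AX)$ between itself and $\|A\|_p$ by passing through the auxiliary operators $|A|$ and $|X|$, and then to read off the orthogonality conditions from the fact that the sandwich must collapse. First I would decompose both operators into their positive and negative parts, writing $A = A^{+} + A^{-}$ and $X = X^{+} + X^{-}$, so that $|A| = A^{+} - A^{-}$ and $|X| = X^{+} - X^{-}$. Expanding $\tr(AX)$ and $\tr(|A|\,|X|)$ into the four cross terms $\tr(A^{\pm} X^{\pm})$ and subtracting gives
\[
\tr(|A|\,|X|) - \tr(AX) = -2\,\tr(A^{+} X^{-}) - 2\,\tr(A^{-} X^{+}).
\]
Each term on the right is nonnegative: $A^{+}$ and $-X^{-}$ are positive semi-definite and the trace of a product of positive semi-definite operators is nonnegative, so $\tr(A^{+} X^{-}) \le 0$, and symmetrically $\tr(A^{-} X^{+}) \le 0$. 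Hence $\tr(AX) \le \tr(|A|\,|X|)$.

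Next I would bound the right-hand side using the dual-norm characterisation in Eq.~\eqref{eq:schatten-opt}. Since $|X|$ is hermitian with $\||X|\|_q = \|X\|_q = 1$, and $|A|$ has the same singular values as $A$ so that $\||A|\|_p = \|A\|_p$, we obtain $\tr(|A|\,|X|) = \langle |X|, |A|\rangle \le \||A|\|_p \,\||X|\|_q = \|A\|_p$. Combining this with the previous step yields the chain
\[
\|A\|_p = \tr(AX) \le \tr(|A|\,|X|) \le \|A\|_p,
\]
so every inequality is forced to hold as an equality.

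The payoff comes from the first of these equalities: $\tr(AX) = \tr(|A|\,|X|)$ forces $\tr(A^{+} X^{-}) + \tr(A^{-} X^{+}) = 0$, and since both summands are nonpositive, each must vanish individually. It then remains to upgrade the vanishing of the traces to the vanishing of the operator products. For this I would invoke the standard fact that for positive semi-definite $P, Q$ one has $\tr(PQ) = 0$ if and only if $PQ = 0$, which follows from the identity $\tr(PQ) = \|Q^{1/2} P^{1/2}\|_2^2$. Applying it to the pairs $(A^{+}, -X^{-})$ and $(-A^{-}, X^{+})$ gives $A^{+} X^{-} = 0$ and $A^{-} X^{+} = 0$, which is the claim.

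I expect the main obstacle to be the sign bookkeeping that establishes $\tr(AX) \le \tr(|A|\,|X|)$: one has to track carefully which of the four cross terms are nonnegative and which are nonpositive, and verify that the two expansions differ precisely by (twice) the off-sign terms. Everything after that is either a direct application of the duality in Eq.~\eqref{eq:schatten-opt} together with $\||A|\|_p = \|A\|_p$, or the elementary trace-to-product lemma for positive semi-definite operators.
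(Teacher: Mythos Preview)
Your proof is correct and follows essentially the same approach as the paper: both establish the chain $\langle A,X\rangle \le \langle |A|,|X|\rangle \le \|A\|_p$ and then read off the orthogonality from equality in the first step. The only cosmetic difference is that the paper writes out full spectral decompositions $A=\sum_j\lambda_jP_j$, $X=\sum_k\pi_kR_k$ and argues term-by-term that $\lambda_j\pi_k<0$ forces $\tr(P_jR_k)=0$ (hence $P_jR_k=0$ for projectors), whereas you work directly at the level of the positive/negative parts and invoke the PSD fact $\tr(PQ)=0\Rightarrow PQ=0$; the underlying idea is the same.
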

\begin{proof}
Let us write $A$ and $X$ in their spectral  decompositions:
\begin{align*}
    A &= \sum_j \lambda_j \, P_j,\\
    X &= \sum_k \pi_k \, R_k.
\end{align*}
It is easy to check that
\begin{align*}
    \langle A, X \rangle = \sum_{j k} \lambda_j \pi_k \tr (P_j \, R_k) \leq \sum_{j k} |\lambda_j| |\pi_k| \tr (P_j \, R_k) = \langle |A|, |X| \rangle \leq \bigl\| \; |A| \; \bigr\|_p = \| A \|_p,
\end{align*}
where the second inequality comes from the fact that $\bigl\| \, | X | \, \bigr\|_q = \| X \|_q = 1$. However, by assumption the left-hand side equals $\| A \|_p$. This implies that the first inequality has to be tight term-by-term, which implies that whenever $\lambda_j \pi_k < 0$, we must have $\tr (P_j \, R_k) = 0$. Since $P_{j}$ and $R_{k}$ are projectors, $\tr (P_j \, R_k) = 0$ implies $P_j \, R_k = 0$, which gives precisely the orthogonality condition stated in the lemma.
\end{proof}

\begin{lem}
\label{lem:spec_commutator}
Let $A$ and $B$ be positive semi-definite rank-1 operators, i.e.
\begin{align*}
    A &= \alpha \ket{e} \bra{e},\\
    B &= \beta \ket{f} \bra{f}
\end{align*}
for some non-negative numbers $\alpha, \beta \geq 0$ and vectors $\ket{e}, \ket{f}$. Then,
\begin{enumerate}
    \item[(a)] the non-zero eigenvalues of $[A,B]$ are given by $\{ \pm i \lambda \}$,
    \item[(b)] the non-zero eigenvalues of $A+B$ are given by $\{ \eta_\pm\}$,
\end{enumerate}
where
\begin{align*}
    \lambda &:= \alpha \beta \, |\braket{e|f}| \, \sqrt{1-|\braket{e|f}|^2},\\
    \eta_\pm &:= \frac{1}{2} ( \alpha + \beta \pm \sqrt{(\alpha-\beta)^2 + 4 \alpha \beta |\braket{e|f}|^2} ).
\end{align*}
In both cases the non-zero eigenvalues are non-degenerate.
\end{lem}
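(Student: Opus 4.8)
The plan is to reduce the whole problem to the two-dimensional subspace $V := \mathspan\{ \ket{e}, \ket{f} \}$. Since $A$ and $B$ both annihilate the orthogonal complement $V^{\perp}$, so do $[A,B]$ and $A+B$; hence every non-zero eigenvalue of either operator comes from its action on $V$, and there are at most two of them in each case. Throughout I write $c := \braket{e|f}$, so that $|c| = |\braket{e|f}|$ and $\braket{f|e} = \bar{c}$. The strategy is to avoid an explicit diagonalisation and instead pin down the (at most two) non-zero eigenvalues using only trace identities, which determine the elementary symmetric polynomials of the spectrum.

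For part (a), I would first note that $[A,B]^{\dagger} = [B^{\dagger}, A^{\dagger}] = [B,A] = -[A,B]$, so the commutator is anti-hermitian and its non-zero eigenvalues are purely imaginary; moreover $\tr[A,B] = 0$. Combined with the fact that there are at most two non-zero eigenvalues, this forces them to be a conjugate pair $\{ +i\lambda, -i\lambda \}$ for some $\lambda \geq 0$ (they are purely imaginary and must sum to zero). To fix $\lambda$ I would compute
\begin{align*}
\tr\big( [A,B]^{\dagger} [A,B] \big) = -\tr\big( [A,B]^{2} \big) = 2\lambda^{2},
\end{align*}
and evaluate the left-hand side directly from $[A,B] = \alpha\beta ( c \ket{e}\bra{f} - \bar{c} \ket{f}\bra{e} )$. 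A short expansion using $\braket{e|f} = c$ and $\braket{f|e} = \bar{c}$ yields $\tr([A,B]^{\dagger}[A,B]) = 2\alpha^{2}\beta^{2} |c|^{2} (1 - |c|^{2})$, whence $\lambda = \alpha\beta |c| \sqrt{1 - |c|^{2}}$, as claimed.

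For part (b), the operator $A+B$ is positive semi-definite of rank at most two, so its non-zero eigenvalues $\eta_{\pm} \geq 0$ are the two roots of a quadratic whose coefficients are the first two symmetric functions of the spectrum. Since the vanishing eigenvalues contribute nothing, I have $\eta_{+} + \eta_{-} = \tr(A+B) = \alpha + \beta$ and, using $\eta_{+}\eta_{-} = \tfrac{1}{2}\big[ (\tr(A+B))^{2} - \tr((A+B)^{2}) \big]$ together with the direct computation $\tr((A+B)^{2}) = \alpha^{2} + \beta^{2} + 2\alpha\beta|c|^{2}$, I get $\eta_{+}\eta_{-} = \alpha\beta(1 - |c|^{2})$. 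Thus $\eta_{\pm}$ solve $\eta^{2} - (\alpha+\beta)\eta + \alpha\beta(1-|c|^{2}) = 0$, and simplifying the discriminant via $(\alpha+\beta)^{2} - 4\alpha\beta(1-|c|^{2}) = (\alpha-\beta)^{2} + 4\alpha\beta|c|^{2}$ gives exactly the stated $\eta_{\pm}$.

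Non-degeneracy then follows immediately from the explicit formulas: in part (a) the two values $\pm i\lambda$ are distinct as soon as $\lambda > 0$, and in part (b) one has $\eta_{+} \neq \eta_{-}$ precisely when the discriminant $(\alpha-\beta)^{2} + 4\alpha\beta|c|^{2}$ is strictly positive — both of which hold in the generic regime $\alpha, \beta > 0$ and $0 < |c| < 1$, which is exactly the situation producing two non-zero eigenvalues. I do not anticipate a genuine obstacle: the single idea is the reduction to $V$, after which everything is fixed by the traces of $[A,B]^{2}$ and $(A+B)^{2}$. The only point requiring a little care is justifying that the imaginary, trace-zero, rank-$\leq 2$ commutator must have spectrum $\{\pm i\lambda\}$, which uses anti-hermiticity and the rank bound together.
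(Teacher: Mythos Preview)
Your argument is correct and follows essentially the same route as the paper: bound the rank by two, use anti-hermiticity and $\tr[A,B]=0$ to force the commutator spectrum into the form $\{\pm i\lambda\}$, and then determine $\lambda$ and $\eta_{\pm}$ from $\tr([A,B]^2)$ and $\tr((A+B)^2)$. The only cosmetic differences are that you phrase the rank bound via the restriction to $V=\mathspan\{\ket{e},\ket{f}\}$ and that in part~(b) you work with the product $\eta_{+}\eta_{-}$ rather than the sum of squares $\eta_{+}^{2}+\eta_{-}^{2}$, but these lead to the identical quadratic.
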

\begin{proof}
Let us first prove part (a). The commutator of $A$ and $B$ is an anti-hermitian operator and its rank is at most 2: $\rank ([A,B]) \leq 2$.
Moreover, since $\tr ([A,B]) = 0$ the non-zero part of the spectrum must be of the form $\{ i\lambda,-i\lambda \}$ for some $\lambda \geq 0$. Let us now calculate the square of the commutator:
\begin{align*}
    [A,B]^2 = ABAB + BABA - AB^2 A - BA^2 B.
\end{align*}
From linearity and cyclicity of the trace we have:
\begin{align*}
    \tr (ABAB) &= \tr (BABA) = \alpha^2 \beta^2 |\braket{e|f} |^4,\\
    \tr (AB^2 A) &= \tr (BA^2 B) = \alpha^2 \beta^2 |\braket{e|f} |^2.
\end{align*}
Combining this with the observation that $\tr ([A,B]^2) = -2 \lambda^2$ leads to the final result of part (a).

For part (b) we use the same method. The sum of A and B is at most rank-2: $\rank(A+B) \leq 2$, hence, the non-zero part of the spectrum is of the form $\{ \eta_+, \eta_- \}$. Computing the square gives
\begin{align*}
    (A+B)^2 = A^2 + AB + BA + B^2.
\end{align*}
Once again we use linearity and cyclicity of the trace to get:
\begin{align*}
    \tr(A+B) &= \alpha + \beta\\
    \tr(A^2) &= \alpha^2\\
    \tr(B^2) &= \beta^2\\
    \tr(AB) &= \tr(BA) = \alpha \beta |\braket{e|f}|^2.
\end{align*}
Finally solving a simple quadratic equation: $\tr(A+B)= \eta_+ + \eta_-$ and $\tr((A+B)^2) = \eta_+^2 + \eta_-^2$ leads to the final result of part (b).
\end{proof}
Let us now introduce the convention that $2^\frac{1}{p} = 1$ when $p = \infty$, which will be used throughout this paper.
\begin{cor}
\label{Corollary:norm_of_com}
For operators $A$ and $B$ as defined in Lemma~\ref{lem:spec_commutator} it holds that
\begin{align*}
    \| [A,B] \|_p = \left( 2 |\lambda|^p \right)^\frac{1}{p} = 2^\frac{1}{p} \cdot \alpha \beta \, c \, \sqrt{1-c^2},
\end{align*}
where $c := |\braket{e|f}|$ is the overlap, i.e.~the modulus of the scalar product between the eigenvectors of $A$ and $B$.
\end{cor}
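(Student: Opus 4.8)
The plan is to read the Schatten $p$-norm straight off the spectral data already provided by Lemma~\ref{lem:spec_commutator}(a), so the argument is essentially a one-line reduction. First I would observe that $[A,B]$ is anti-hermitian: since $A$ and $B$ are hermitian, $[A,B]^{\dagger} = (AB - BA)^{\dagger} = BA - AB = -[A,B]$. In particular $[A,B]$ is normal, and for a normal operator the singular values coincide with the moduli of the eigenvalues (they are exactly the eigenvalues of $|[A,B]|$ as defined in the preliminaries).

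Next I would invoke part (a) of Lemma~\ref{lem:spec_commutator}, which tells us that the non-zero eigenvalues of $[A,B]$ are $i\lambda$ and $-i\lambda$, all remaining eigenvalues being zero (recall $\rank([A,B]) \leq 2$). Taking moduli, the non-zero singular values are therefore $|i\lambda| = |\lambda|$ and $|-i\lambda| = |\lambda|$, i.e.\ precisely two singular values, each equal to $|\lambda|$.

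It then remains only to substitute into the definition of the Schatten $p$-norm. For $p \in [1, \infty[$ this gives $\|[A,B]\|_p = (|\lambda|^p + |\lambda|^p)^{1/p} = (2|\lambda|^p)^{1/p} = 2^{1/p} |\lambda|$, while for $p = \infty$ the norm equals the largest singular value $|\lambda|$, which agrees with the same expression once we adopt the stated convention $2^{1/p} = 1$. Plugging in $\lambda = \alpha \beta \, c \sqrt{1 - c^2}$ with $c := |\braket{e|f}|$ then completes the computation and yields the claimed formula.

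Because everything follows directly from Lemma~\ref{lem:spec_commutator}, there is no genuine obstacle here. The only step deserving a moment of attention is the passage from eigenvalues to singular values, which hinges on the normality of $[A,B]$, together with the bookkeeping for the $p = \infty$ endpoint, which is precisely what the convention $2^{1/p} = 1$ is designed to handle.
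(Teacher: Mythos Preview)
Your proposal is correct and matches the paper's treatment: the corollary is stated there without proof, as an immediate consequence of Lemma~\ref{lem:spec_commutator}(a), and your argument spells out exactly the intended one-line reduction (normality of $[A,B]$, singular values equal to $|\lambda|$, then the $p=\infty$ convention).
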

To simplify the notation let us define
\begin{align}
\label{eq:h_p_def}
   h_p (c) := 2^\frac{1}{p} \cdot c \, \sqrt{1-c^2},
\end{align}
which allows us to write the $p$-norm of the commutator as
\begin{align*}
    \| [A,B] \|_p = \alpha \beta h_p (c).
\end{align*}
\section{A new family of incompatibility measures and its basic properties}
\label{sec:inc_measure_def_prop}
We are now ready to explicitly state what an incompatibility measure is and what properties we would like it to satisfy. Let $\mathcal{M}_{d, n}$ be the set of measurements with $n$ outcomes acting on $\mathbb{C}^{d}$. For our purposes an incompatibility measure is a function $\Upsilon : \mathcal{M}_{d, n_{E}} \times \mathcal{M}_{d, n_{F}} \to \mathbb{R}$. Note that the number of outcomes of the two measurements may differ, but the dimension they act on must be the same. Let us now list some natural properties that a reasonable measure should satisfy.
\begin{itemize}
    \item The measure should be non-negative for all pairs of measurements.
    \item The measure should be equal to zero for a natural and easy to characterize class of measurements.
    \item The measure should be non-increasing in some natural scenarios in which it should not be possible to generate incompatibility.
    \item We should be able to determine the maximal value of the measure and fully characterize the pairs of measurement that achieve it.
\end{itemize}

\subsection{Definition and basic properties}
In the remainder of this section we define a new family of incompatibility measures, we study their basic properties and fully characterize the most incompatible measurements.
For $p \in [1, \infty]$ the incompatibility of
measurements $\{ E_a \}_{a=1}^{n_E}$ and $\{ F_b \}_{b=1}^{n_F}$ is given by
\begin{align*}
    \Upsilon_p (E,F) := \sum_{a b} \|[E_a,F_b]\|_p,
\end{align*}
where the summation is taken over $a \in \{1, 2, \ldots, n_{E}\}$ and $b \in \{1, 2, \ldots, n_{F}\}$.

Clearly, these incompatibility measures are non-negative and since Schatten norms vanish if and only if the operator vanishes, the measures vanish if any only if all the measurement operators commute. 
This addresses the first two requirements mentioned in the previous section.

The measures are not affected if a unitary operation is applied to both measurements. 
It is sufficient to note that for an arbitrary unitary $U$ we have
\begin{align*}
    [U E_a U^\dagger, U F_b U^\dagger] = U E_a U^\dagger U F_b U^\dagger - U F_b U^\dagger U E_a U^\dagger = U E_a F_b U^\dagger - U F_b E_a U^\dagger = U [E_a,F_b] U^\dagger
\end{align*}
and use the fact that Schatten norms are invariant under unitaries.

\phantomsection
\label{sec:post-processing}
Since post-processing is a classical operation we expect that it should not increase incompatibility, which is an inherently quantum property. 
Let us now show that our measures do not increase under post-processing. 
Consider a pair of measurements $E$ and $F$ and suppose that $E$ undergoes some post-processing (as defined in Eq.~\eqref{eq:def_post-processing}) which generates $\widetilde{E}=\{\widetilde{E}_{a'}\}_{a'}$. It is easy to check that
\begin{align*}
    \Upsilon_p(\widetilde{E},F) &= \sum_{a' b} \| [\widetilde{E}_{a'},F_b]\|_p
    = \sum_{a' b} \|[\sum_a P_{a'|a}E_a,F_b]\|_p
    = \sum_{a' b} \|\sum_a P_{a'|a} [E_a,F_b]\|_p\\
    &\nctwo \sum_{a a' b} P_{a'|a} \|[E_a,F_b]\|_p
    \ncthree \sum_{a b} \|[E_a,F_b]\|_p
    = \Upsilon_p(E,F).
\end{align*}
Due to linearity of the commutator and non-negativity of the post-processing coefficients we can take the summation out of the commutator and then in (1) we use the triangle inequality for Schatten norms, which then in (2) allows us to perform an explicit summation over $a'$.
Clearly, this argument extends to the situation in which both measurements undergo some post-processing.

Let us now show that our measure fails to be monotonic under pre-processing.
Let $E$ and $F$ be a pair of two-outcome projective measurements on a qutrit:
\begin{align*}
    E_1 &= 
    \begin{bmatrix}
        1 & 0 & 0\\
        0 & 1 & 0\\
        0 & 0 & 0
    \end{bmatrix}, \hspace{0.5cm}
    E_2 =
    \begin{bmatrix}
        0 & 0 & 0\\
        0 & 0 & 0\\
        0 & 0 & 1
    \end{bmatrix},\\
    F_1 &=
    \begin{bmatrix}
        1 & 0 & 0\\
        0 & 0 & 0\\
        0 & 0 & 1
    \end{bmatrix}, \hspace{0.5cm}
    F_2 =
    \begin{bmatrix}
        0 & 0 & 0\\
        0 & 1 & 0\\
        0 & 0 & 0
    \end{bmatrix}.
\end{align*}
Since the measurement operators commute, we immediately see that $\Upsilon_p (E,F) = 0$.

Now consider a unital map $\Lambda^\dagger : L(\mathbb{C}^3) \xrightarrow{} L(\mathbb{C}^2)$ given by
\begin{align*}
    \Lambda^\dagger (A) = \sum_j K_j^\dagger A K_j,
\end{align*}
where
\begin{align*}
    K_1 = \frac{\sqrt{2}}{3}
    \begin{bmatrix}
        1 & 0\\
        1 & 0\\
        -1 & 0
    \end{bmatrix}, \hspace{0.5cm}
    K_2 =
    \begin{bmatrix}
        -\frac{1}{2\sqrt{3}} & \frac{1}{2}\\
        \frac{1}{2\sqrt{3}} & -\frac{1}{2}\\
        -0 & 0
    \end{bmatrix}, \hspace{0.5cm}
    K_3 =
    \begin{bmatrix}
        -\frac{1}{\sqrt{6}} & -\frac{1}{\sqrt{2}}\\
        0 & 0\\
        0 & 0
    \end{bmatrix}.
\end{align*}
It is easy to calculate that $\Upsilon_p (\Lambda^\dagger (E),\Lambda^\dagger (F)) =  \frac{2^{\frac{1}{p}+2}}{9\sqrt{3}}$. The fact that compatible measurements become incompatible after pre-processing might seem puzzling at first, but what it really shows is that commutativity is a fragile property which is easily disturbed.

One might ask whether the counterexample above is the simplest one. Since both measurements have two outcomes and the output system is a qubit, the only part we can hope to simplify is the dimension of the input system. It turns out that for qubit-to-qubit pre-processing a situation in which compatible measurements become incompatible is not possible. An arbitrary $2 \times 2$ operator $A$ can be written as
\begin{align*}
    A = c_A \mathbb{1} + T,
\end{align*}
where $c_{A}$ is a constant and $T$ is a linear combination of the Pauli matrices $\sigma_{x}, \sigma_{y}, \sigma_{z}$. Every operator $B$ which commutes with $A$ must necessarily be of the form
\begin{align*}
    B = c_B \mathbb{1} + \alpha T
\end{align*}
for some constants $c_{B}$ and $\alpha$. It is now clear that for any unital map $\Lambda^{\dagger}$, we have $[\Lambda^{\dagger}(A), \Lambda^{\dagger}(B) ] = 0$.

Note that the argument above does not rule out a weaker failure of monotonicity in which an incompatible pair of qubit measurements becomes more incompatible under pre-processing. However, we have not found such an example during our numerical investigation.

\subsection{The most incompatible measurements}
In this section we introduce the notion of the most incompatible pairs of measurements according to our measures. We also provide a complete characterization of such pairs.

Let $\Upsilon_p^*(n_E,n_F,d)$ be the largest value of the incompatibility measure $\Upsilon_p$ achievable by a pair of measurements with $n_E$ and $n_F$ outcomes, respectively, acting on dimension $d$:
\begin{align*}
    \Upsilon_p^*(n_E,n_F,d):= \sup_{E,F} \Upsilon_p(E,F).
\end{align*}
Since the set of measurements with a fixed number of outcomes acting on a fixed dimension is compact and the measures are continuous, the supremum is actually achieved, i.e.
\begin{align*}
    \Upsilon_p^*(n_E,n_F,d) = \max_{E,F} \Upsilon_p(E,F).
\end{align*}

It is well-known that every measurement can be obtained by post-processing a rank-1 measurement. This is important, because for pairs of rank-1 measurements we can derive a tight upper bound on the incompatibility measure.
\begin{lem}
\label{lem:most-incompatible}
Let $E$ and $F$ be two measurements acting on $\mathbb{C}^{d}$.
Then, $\Upsilon_{p}(E, F) \leq 2^\frac{1}{p} d\sqrt{d-1}$.
This upper bound is tight as it is saturated by a pair of mutually unbiased bases in $\mathbb{C}^{d}$.
\end{lem}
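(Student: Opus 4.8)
The plan is to reduce the general statement to the case of rank-1 measurements and then control the resulting expression with Jensen's inequality applied to the concave function governing the norm of the commutator. Since the excerpt recalls that every measurement is a post-processing of some rank-1 measurement, and we have already shown that $\Upsilon_p$ is non-increasing under post-processing, it suffices to prove the bound when both $E$ and $F$ are rank-1. Passing to rank-1 refinements may increase the number of outcomes, but this is harmless because the claimed bound $2^{1/p} d\sqrt{d-1}$ is independent of $n_E$ and $n_F$.

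Writing the rank-1 operators as $E_a = \alpha_a \ket{e_a}\bra{e_a}$ and $F_b = \beta_b \ket{f_b}\bra{f_b}$ with unit vectors $\ket{e_a},\ket{f_b}$ and setting $c_{ab} := |\braket{e_a|f_b}|$, Corollary~\ref{Corollary:norm_of_com} gives $\|[E_a,F_b]\|_p = \alpha_a\beta_b\, h_p(c_{ab})$, so that
\begin{align*}
  \Upsilon_p(E,F) = 2^{1/p}\sum_{ab}\alpha_a\beta_b\, c_{ab}\sqrt{1 - c_{ab}^2}.
\end{align*}
The key observation is that $g(x) := \sqrt{x(1-x)}$ is concave on $[0,1]$ and $c_{ab}\sqrt{1-c_{ab}^2} = g(c_{ab}^2)$. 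I would then extract the constraints imposed by $\sum_a E_a = \sum_b F_b = \mathbb{1}$: taking the trace yields $\sum_a \alpha_a = \sum_b \beta_b = d$, while sandwiching the resolution of identity $\mathbb{1} = \sum_a \alpha_a \ket{e_a}\bra{e_a}$ between $\bra{f_b}$ and $\ket{f_b}$ yields $\sum_a \alpha_a c_{ab}^2 = 1$ for every $b$.

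Next I would apply Jensen's inequality in the form of Eq.~\eqref{eq:Jensen} using the weights $p_{ab} := \alpha_a\beta_b/d^2$, which are non-negative and sum to $(\sum_a \alpha_a)(\sum_b \beta_b)/d^2 = 1$. Concavity of $g$ gives $\sum_{ab} p_{ab}\, g(c_{ab}^2) \leq g\big(\sum_{ab} p_{ab}\, c_{ab}^2\big)$, and the constraints above evaluate the averaged argument exactly:
\begin{align*}
  \sum_{ab} p_{ab}\, c_{ab}^2 = \frac{1}{d^2}\sum_b \beta_b \sum_a \alpha_a c_{ab}^2 = \frac{1}{d^2}\sum_b \beta_b = \frac{1}{d}.
\end{align*}
Since $g(1/d) = \frac{\sqrt{d-1}}{d}$, multiplying through by $d^2$ yields $\sum_{ab}\alpha_a\beta_b\, g(c_{ab}^2) \leq d\sqrt{d-1}$, and hence $\Upsilon_p(E,F) \leq 2^{1/p} d\sqrt{d-1}$, as claimed.

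For tightness I would take $E$ and $F$ to be measurements in two mutually unbiased bases, so that $\alpha_a = \beta_b = 1$ and $c_{ab}^2 = 1/d$ for all $a,b$; then each of the $d^2$ terms equals $g(1/d) = \frac{\sqrt{d-1}}{d}$ and the sum reproduces $2^{1/p} d\sqrt{d-1}$ exactly, consistent with the strict-concavity equality condition (all overlaps equal). I do not expect a serious obstacle, but the step requiring most care is the reduction to rank-1: one must confirm both that the rank-1 refinement exists for an arbitrary measurement and that the outcome-independence of the bound neutralizes any growth in the number of outcomes. A minor secondary point is to note that the averaged argument $1/d$ lies in the region $[0,\tfrac12]$ where $g$ is increasing (since $d \geq 2$), so that inserting $g(1/d)$ is legitimate and indeed optimal.
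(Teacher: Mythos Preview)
Your proposal is correct and follows essentially the same route as the paper: reduce to rank-1 measurements via post-processing monotonicity, use Corollary~\ref{Corollary:norm_of_com} to express $\Upsilon_p$ through the overlaps, extract the normalisation constraints $\sum_a\alpha_a=\sum_b\beta_b=d$ and $\sum_{ab}\alpha_a\beta_b c_{ab}^2=d$, and apply Jensen's inequality to the strictly concave function $t\mapsto\sqrt{t(1-t)}$ with weights $\alpha_a\beta_b/d^2$. The only cosmetic difference is that you obtain the overlap constraint by sandwiching the resolution of the identity rather than via $\tr(E_aF_b)$; your final aside about $g$ being increasing on $[0,\tfrac12]$ is unnecessary, since Jensen's inequality needs only concavity, not monotonicity.
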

\begin{proof}
In Section~\ref{sec:post-processing} we have shown that our measure is non-increasing under post-processing, so for the purpose of deriving upper bounds it suffices to consider pairs of (not necessarily projective) rank-1 measurements $E=\{E_a\}_{a=1}^{n_E}$ and $F=\{F_b\}_{b=1}^{n_F}$ in $\mathbb{C}^d$. Then, we can write the measurement operators as: $E_a = \alpha_a \ket{e_a} \bra{e_a}$, $F_b = \beta_b \ket{f_b} \bra{f_b}$, where $\alpha_a, \beta_b \in [0,1]$ and $c_{ab} := |\braket{e_a|f_b}|$.
Corollary~\ref{Corollary:norm_of_com} implies that for rank-1 measurements we have
\begin{align}
\label{eq:overlap_funtion_c_a_b}
    \Upsilon_p (E,F) = \sum_{a b} \alpha_a \beta_b h_p(c_{ab}),
\end{align}
where $h_{p}(c)$ is the function defined in Eq.~\eqref{eq:h_p_def}. In this case the measures corresponding to different $p \in [1,\infty]$ are equivalent and differ only by the numerical prefactor.

Normalization of each measurement implies that
\begin{align*}
    \sum_a \alpha_a = \sum_b \beta_b = d.
\end{align*}
Moreover, we have
\begin{align*}
    \tr( E_a \cdot F_b ) = \alpha_a \beta_b |c_{ab}|^2
\end{align*}
and by summing over $a$ and $b$ we obtain
\begin{align*}
    \sum_{a b} \alpha_a \beta_b c_{ab}^2 = d.
\end{align*}
To simplify our calculations we define
$t_{ab} := c_{ab}^2$ and $\widetilde{h}_p(t_{ab}) := 2^\frac{1}{p} \cdot \sqrt{t_{ab}(1-t_{ab})}$.
To see that $\widetilde{h}_p$ is strictly concave on the interval $[0, 1]$ note that it is a composition of a strictly concave quadratic function with the square root function, which is strictly increasing and strictly concave.
This allows us to bound the right-hand side of Eq.~\eqref{eq:overlap_funtion_c_a_b} using Jensen's inequality:
\begin{align*}
    \sum_{a b} \frac{\alpha_a \beta_b}{d^2} \widetilde{h}_p(t_{ab}) \leq \widetilde{h}_p \bigg( \sum_{a b} \frac{\alpha_a \beta_b}{d^2} t_{ab} \bigg) = \widetilde{h}_p \Big( \frac{1}{d} \Big) = 2^\frac{1}{p} \sqrt{\frac{1}{d} \left( 1 - \frac{1}{d} \right)},
\end{align*}
where the factor of $\frac{1}{d^{2}}$ is required for normalization.
Clearly, this upper bound is tight for a pair of MUBs in dimension $d$.
\end{proof}

\begin{cor}
\label{corollary:after_lem3}
Since $\widetilde{h}_p(t_{ab})$ is strictly concave on the interval $[0,1]$, the only pairs of rank-1 measurements which achieve the maximal value are those for which all the overlaps are equal to $\frac{1}{\sqrt{d}}$.
\end{cor}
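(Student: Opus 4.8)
The plan is to extract the equality condition from the Jensen step that already appears in the proof of Lemma~\ref{lem:most-incompatible}. Recall that for a pair of rank-1 measurements we wrote $\Upsilon_p(E,F) = d^2 \sum_{ab} p_{ab}\, \widetilde{h}_p(t_{ab})$ with weights $p_{ab} := \alpha_a \beta_b / d^2$ (non-negative and summing to unity thanks to normalization), and bounded this by $\widetilde{h}_p\big(\sum_{ab} p_{ab} t_{ab}\big) = \widetilde{h}_p(1/d)$, yielding the maximum $2^{\frac{1}{p}} d\sqrt{d-1}$. A rank-1 pair therefore attains the maximal value if and only if this single Jensen inequality is saturated, so the whole corollary reduces to characterizing when equality holds in Jensen's inequality for the strictly concave function $\widetilde{h}_p$.

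First I would invoke the standard equality case: for a strictly concave function, equality in Jensen's inequality forces all arguments carrying strictly positive weight to coincide. Applied to $\widetilde{h}_p$, saturation implies $t_{ab} = t_{a'b'}$ for every pair of index pairs with $p_{ab}, p_{a'b'} > 0$.

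The key step, and the only place where the standing hypotheses genuinely enter, is to argue that \emph{every} weight is strictly positive. Since $p_{ab} = \alpha_a \beta_b / d^2$, this amounts to verifying $\alpha_a > 0$ and $\beta_b > 0$ for all $a,b$; this holds because the paper assumes every measurement operator is nonzero, so the rank-1 operators $E_a = \alpha_a \ket{e_a}\bra{e_a}$ and $F_b = \beta_b \ket{f_b}\bra{f_b}$ necessarily have positive coefficients. Consequently there are no ``free'' arguments of vanishing weight, and all $t_{ab}$ must equal a single common value $t$.

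Finally I would pin down $t$ using the normalization constraint: since all $t_{ab} = t$, we get $t = \sum_{ab} p_{ab} t_{ab} = 1/d$, whence $c_{ab}^2 = t_{ab} = 1/d$ and $c_{ab} = 1/\sqrt{d}$ for all $a,b$, exactly as claimed. I expect the only subtlety to be the positivity-of-weights argument; without the convention $E_a, F_b \neq 0$ one could only conclude equality of the overlaps attached to positive weights, so this assumption is what upgrades the conclusion to \emph{all} overlaps. Everything else is a direct reading of the equality case of Jensen's inequality for strictly concave functions.
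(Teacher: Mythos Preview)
Your proposal is correct and matches the paper's approach exactly: the corollary is presented there without a separate proof, as an immediate consequence of the strict concavity of $\widetilde{h}_p$ and the equality case of the Jensen step in Lemma~\ref{lem:most-incompatible}. Your write-up simply makes this explicit, including the pertinent observation that all weights $p_{ab} = \alpha_a\beta_b/d^2$ are strictly positive thanks to the standing convention $E_a, F_b \neq 0$.
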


The fact that the maximal value of $\Upsilon_p$ shows a strong dependence on the dimension suggests that we should not compare values for measurements acting on different dimensions.
In principle, we could introduce a dimension-dependent prefactor to ensure that the maximal value is always equal to $1$, but this would be rather arbitrary. 
This explains why we should not expect the measure to behave well in scenarios where the dimension of the quantum system is not preserved, e.g.~under pre-processing.

So far we have shown that for rank-1 measurements achieving the maximal incompatibility is equivalent to having uniform overlaps. It is then natural to ask whether maximal incompatibility can be achieved by measurements which are not rank-1 and in this section we show that this is not possible. Let us start by proving the following technical lemma. 
\begin{lem}
\label{lem:sharp_trig_ineq}
Let $\{\ket{0}, \ket{1}, \ket{2}\}$ be an orthonormal basis on $\mathbb{C}^{3}$. Consider the following operators:
\begin{align*}
    P_0 &= \ket{0} \bra{0},\\
    P_1 &= \gamma \ket{1} \bra{1}, \hspace{0.5cm} 
    \gamma \in \, ]0,1]\\
    Q &= \ket{\psi} \bra{\psi}, \hspace{0.5cm} \ket{\psi} = a \ket{0} + b \ket{1} + c \ket{2},\\
    A &= i [P_0,Q],\\
    B &= i [P_1,Q],
\end{align*}
where $a, b, c \in \mathbb{C}$ and $a b \neq 0$.
Then, for any $p \in [1, \infty]$ we have
\begin{align*}
    \| A + B \|_p < \| A \|_p + \| B \|_p.
\end{align*}
\end{lem}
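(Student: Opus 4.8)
The plan is to prove the strict inequality by contradiction, using Lemma~\ref{lem:tri_ineq_sat} to reduce saturation of the triangle inequality to the existence of a common optimizer, and then showing that the geometry of the eigenvectors of $A$ and $B$ forbids such an operator. First I would record the structure of $A$ and $B$. Both are hermitian ($i$ times a commutator of hermitian operators) and traceless, and by Lemma~\ref{lem:spec_commutator} together with Corollary~\ref{Corollary:norm_of_com} each has rank exactly two, with spectra $\{\pm\lambda_A,0\}$ and $\{\pm\lambda_B,0\}$ where $\lambda_A=|a|\sqrt{1-|a|^2}$ and $\lambda_B=\gamma|b|\sqrt{1-|b|^2}$; the hypothesis $ab\neq0$ guarantees $\lambda_A,\lambda_B>0$, so $\|A\|_p=2^{1/p}\lambda_A$ and $\|B\|_p=2^{1/p}\lambda_B$ are both nonzero. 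Writing $A^{+}=\lambda_A\ket{\alpha_+}\bra{\alpha_+}$, $A^{-}=-\lambda_A\ket{\alpha_-}\bra{\alpha_-}$ and analogously for $B$, I note that the two eigenvectors of $A$ span $V_A:=\mathspan\{\ket0,\ket\psi\}$ while those of $B$ span $V_B:=\mathspan\{\ket1,\ket\psi\}$.

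Assume now that $\|A+B\|_p=\|A\|_p+\|B\|_p$. By Lemma~\ref{lem:tri_ineq_sat} there is a hermitian $X$ with $\|X\|_q=1$ that is simultaneously optimal for $A$ and $B$. Applying Lemma~\ref{lem1} to the pairs $(A,X)$ and $(B,X)$ yields $X^{+}\ket{\alpha_-}=X^{+}\ket{\beta_-}=0$ and $X^{-}\ket{\alpha_+}=X^{-}\ket{\beta_+}=0$. The decisive structural input is that, when $c\neq0$, the vectors $\ket0,\ket1,\ket\psi$ are linearly independent, so $V_A\cap V_B=\mathbb{C}\ket\psi$; since $\ket\psi$ is not an eigenvector of $A$ (one checks $A\ket\psi=ia\ket0-i|a|^2\ket\psi\not\propto\ket\psi$ because $b\neq0$) nor of $B$, none of $\ket{\alpha_\pm},\ket{\beta_\pm}$ is proportional to $\ket\psi$, and hence $\ket{\alpha_+},\ket{\beta_+}$ are linearly independent, as are $\ket{\alpha_-},\ket{\beta_-}$. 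Consequently $X^{+}$ is annihilated by two independent vectors and $X^{-}$ by two others, forcing $\rank X^{+}\le1$ and $\rank X^{-}\le1$; write $X^{+}=x^{+}\ket g\bra g$ and $X^{-}=-x^{-}\ket h\bra h$ with $x^{\pm}\ge0$.

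To finish the generic case I would exploit the saturation conditions quantitatively. The orthogonality relations kill the cross terms, so $\braket{X,A}=x^{+}\lambda_A|\braket{g|\alpha_+}|^2+x^{-}\lambda_A|\braket{h|\alpha_-}|^2$, and optimality for $A$ reads $x^{+}|\braket{g|\alpha_+}|^2+x^{-}|\braket{h|\alpha_-}|^2=2^{1/p}$, with the analogous identity for $B$. Since $\|X\|_q=1$ forces $x^{+}+x^{-}\le2^{1/p}$ and each overlap is at most $1$, the $A$-identity can hold only if $|\braket{g|\alpha_+}|=1$ whenever $x^{+}>0$ and $|\braket{h|\alpha_-}|=1$ whenever $x^{-}>0$; the $B$-identity likewise forces $|\braket{g|\beta_+}|=1$ and $|\braket{h|\beta_-}|=1$ in those cases. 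But $|\braket{g|\alpha_+}|=|\braket{g|\beta_+}|=1$ would make $\ket{\alpha_+}\propto\ket g\propto\ket{\beta_+}$, contradicting their independence; hence $x^{+}=0$, and symmetrically $x^{-}=0$, so $X=0$, contradicting $\|X\|_q=1$. This establishes the strict inequality whenever $c\neq0$.

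The degenerate case $c=0$ must be handled separately, since then $\ket\psi\in\mathspan\{\ket0,\ket1\}$, the planes $V_A$ and $V_B$ coincide, and the independence argument collapses. Here I would argue directly: because $\ket2\bra2$ commutes with $Q$ and $\ket0\bra0+\ket1\bra1=\mathbb{1}-\ket2\bra2$, one gets $[\ket1\bra1,Q]=-[\ket0\bra0,Q]$, hence $B=-\gamma A$; then $\|A+B\|_p=(1-\gamma)\|A\|_p<(1+\gamma)\|A\|_p=\|A\|_p+\|B\|_p$, using $\gamma>0$ and $\|A\|_p>0$. The main obstacle I anticipate lies in the generic case: correctly extracting the rank-one structure of $X^{\pm}$ and, in particular, accommodating the possibility that one of $x^{+},x^{-}$ vanishes (which is genuinely possible at $p=\infty$, where single-operator optimizers are non-unique), so that the contradiction must be drawn from whichever part of $X$ survives rather than from a fixed full-rank optimizer.
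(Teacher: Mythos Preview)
Your argument is correct and follows the same skeleton as the paper's proof: both handle $c=0$ by noting $B=-\gamma A$, and for $c\neq0$ both invoke Lemma~\ref{lem:tri_ineq_sat} and Lemma~\ref{lem1} to force $\rank X^{\pm}\le 1$. The difference lies in how the contradiction is extracted. The paper computes the eigenvectors of $A$ and $B$ explicitly, splits into the cases $\rank X=2$ and $\rank X=1$, and in the rank-$2$ case writes down the unique candidates $\ket{\pi^{\pm}}$ for the eigenvectors of $X$ and checks by direct calculation that $\braket{\pi^{+}|\pi^{-}}$ has a nonvanishing imaginary part, so they cannot be orthogonal. You replace this coordinate computation by two cleaner observations: first, the geometric fact that $V_A\cap V_B=\mathbb{C}\ket{\psi}$ together with $\ket{\psi}$ not being an eigenvector of $A$ or $B$ gives the linear independence of $\ket{\alpha_{\pm}}$ and $\ket{\beta_{\pm}}$ without ever writing the eigenvectors down; second, the H\"older-type bound $x^{+}+x^{-}\le 2^{1/p}$ combined with the two saturation identities forces $\ket{g}\propto\ket{\alpha_{+}}\propto\ket{\beta_{+}}$ whenever $x^{+}>0$ (and analogously for $x^{-}$), which handles the rank-$1$ and rank-$2$ cases of $X$ simultaneously. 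What your approach buys is a proof that is basis-free and avoids the somewhat delicate explicit inner-product calculation; what the paper's approach buys is that it never needs the inequality $x^{+}+x^{-}\le 2^{1/p}$ or the saturation accounting, relying instead on a single concrete non-orthogonality check.
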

\begin{proof}
Let us first consider the special case of $c=0$. It is easy to see that:
\begin{align*}
    A &= i( ab^* \ket{0} \bra{1} - a^* b \ket{1} \bra{0}),\\
    B &= i \gamma ( -ab^* \ket{0} \bra{1} + a^* b \ket{1} \bra{0} ),
\end{align*}
so $B = - \gamma A$. Then, the triangle inequality is necessarily strict because $2^\frac{1}{p} (1 - \gamma)|a|\cdot |b| < 2^\frac{1}{p} (1 + \gamma) |a|\cdot |b|$ for $\gamma\in \, ]0,1]$ and $a b \neq 0$.

In the case of $c \neq 0$ let us write the hermitian operators $A$ and $B$ in their spectral decomposition.
The eigenvalues of $A$ are given by
\begin{align*}
     \lambda &= \{ 0, \alpha, -\alpha \}, \text{ for } \alpha = \sqrt{|a|^2 (|b|^2 + |c|^2)},
\end{align*}
and the corresponding unnormalized eigenvectors are
\begin{align*}
    \left\{ \Colvec{0,-\frac{c^*}{b^*},1}, \Colvec{\frac{i \alpha}{a^* c}, \frac{b}{c}, 1}, \Colvec{\frac{-i \alpha}{a^* c}, \frac{b}{c}, 1} \right\}.
\end{align*}
The eigenvalues of $B$ are given by
\begin{align*}
    \{ 0,\gamma \beta, -\gamma \beta \}, \text{ for } \beta = \sqrt{|b|^2 (|a|^2 + |c|^2)},
\end{align*}
and the corresponding unnormalized eigenvectors are
\begin{align*}
    \left\{ \Colvec{- \frac{c^*}{a^*}, 0, 1}, \Colvec{\frac{a}{c}, \frac{i \beta}{b^* c}, 1}, \Colvec{\frac{a}{c}, \frac{-i \beta}{b^* c}, 1} \right\}.
\end{align*}
Since $ab \neq 0$ immediately implies $\alpha, \beta > 0$, the positive and negative parts of $A$ and $B$ are 1-dimensional, so the operators can be written as:
\begin{align*}
    A &= \underbrace{\alpha P^+}_{=A^+} \underbrace{- \alpha P^-}_{=A^-},\\
    B &= \underbrace{\gamma \beta Q^+}_{=B^+} \underbrace{- \gamma \beta Q^-}_{=B^-},
\end{align*}
where $P^{\pm}$ and $Q^{\pm}$ are rank-1 projectors onto the corresponding eigenspaces.
To show that the triangle inequality is always strict, we use the equivalence proved in Lemma~\ref{lem:tri_ineq_sat}, which reduces the problem to arguing that there does not exist a hermitian operator $X$ which is a simultaneous optimizer for both $A$ and $B$. 
Let us first apply Lemma~\ref{lem1} to both $A$ and $B$ to restrict the set of potential optimizers.
More specifically, it requires that the positive part of $X$ is orthogonal to the negative parts of both $A$ and $B$ and similarly the negative part of $X$ is orthogonal to the positive parts of $A$ and $B$.
Let us first argue that $A^{+}$ and $B^{+}$ do not project on the same subspace. To do so it suffices to show that their eigenvectors are not proportional to each other. Since the last coordinate of both vectors is equal to $1$, it suffices to check whether the other coordinates are equal. Equating the first coordinates leads to
\begin{align*}
    \frac{i \alpha}{a^* c} &= \frac{a}{c},
\end{align*}
from which we conclude that $i \alpha = |a|^2$. However, $\alpha$ is a real positive number, so this is a contradiction.
Since $A^+$ and $B^+$ are $1$-dimensional and not proportional to each other, any positive semi-definite operator acting on $\mathbb{C}^{3}$ which is orthogonal to both of them must be of rank at most 1, i.e.~$\rank(X^{-}) \leq 1$.
An analogous argument holds for the negative parts of $A$ and $B$, which implies that the positive part of the potential optimizer, $X^{+}$, has rank at most 1.
This leaves us with two possible forms for $X$: (a) either $X$ has two non-zero eigenvalues (1 positive and 1 negative) or (b) it has only 1 non-zero eigenvalue. In the rest of the proof we denote the projectors on the positive and negative part of $X$ by $R^\pm$.

Let us first show that there does not exist an optimal $X$ of the first kind, i.e.~having 1 positive and 1 negative eigenvalue. Then, $R^+ = \ket{\pi^{+}} \bra{\pi^{+}}$ and $R^- = \ket{\pi^{-}} \bra{\pi^{-}}$, where $\ket{\pi^{\pm}}$ are normalized eigenvectors of $X$.
Since the potential eigenvectors of $X$ are uniquely defined through orthogonality to $A^{\pm}$ and $B^{\pm}$ we can write them down (up to a normalisation constant):
\begin{align*}
    \ket{\pi^+} \propto \Colvec{\frac{b^*}{c^*} - \frac{i \beta}{b c^*}, \frac{a^*}{c^*} - \frac{i \alpha}{a c^*}, - \frac{a^* b^*}{(c^*)^2} - \frac{\alpha \beta}{a b (c^*)^2}},\\
    \ket{\pi^-} \propto \Colvec{\frac{b^*}{c^*} + \frac{i \beta}{b c^*}, \frac{a^*}{c^*} + \frac{i \alpha}{a c^*}, - \frac{a^* b^*}{(c^*)^2} - \frac{\alpha \beta}{a b (c^*)^2}}.
\end{align*}
Then, it is easy to see that
\begin{align*}
    \braket{\pi^+|\pi^-} = \frac{1}{\mathcal{N}} \left[ |a|^2 |b|^2 |c|^2 (|b|^2 + |a|^2) + (\alpha \beta + |a|^2 |b|^2)^2 - |c|^2 (|b|^2 \alpha^2 + |a|^2 \beta^2) + 2i |a|^2 |b|^2 |c|^2 (\alpha + \beta) \right],
\end{align*}
where $\mathcal{N}$ is some non-zero normalization constant.
As we can see, for non-zero coefficients $a, b, c$ we always have a non-vanishing imaginary part, which implies that the vectors are not orthogonal. 
This rules out the possibility of $X$ having two non-zero eigenvalues, because the positive and negative parts of a hermitian operator must be orthogonal.

The second option we have to consider corresponds to:
\begin{align*}
    X = R^+ \; \text{or} \; X = - R^-.
\end{align*}
Let us start with the case $X = R^+$. By calculating the inner product of $A$ and $X$ we obtain:
\begin{align*}
    \langle A, X \rangle = \underbrace{\langle A^+, R^+ \rangle}_{\geq 0} + \underbrace{\langle A^-, R^+ \rangle}_{\leq 0} \leq \langle A^+, R^+ \rangle = \langle \alpha P^+, R^+ \rangle = \alpha \tr(P^+ R^+) \leq \alpha.
\end{align*}
Since $P^{+}$ and $R^{+}$ are rank-1 projectors, in order to saturate the final inequality we require $P^+ = R^+$. 
From a similar analysis of $\langle B, X \rangle$, we conclude that $Q^+ = R^+$. 
However, these conditions together imply that that $P^+ = Q^+$, which we have shown before to be false. 
Identical result can be obtained in the case $X = - R^-$.
Finally, this proves that there does not exist $X$ which is optimal for both $A$ and $B$ simultaneously.
\end{proof}

We are now ready to prove the main result of this section.

\begin{lem}
\label{lem:app_lem_sharp_trig_ineq}
For a pair of measurements $E=\{E_a\}_{a=1}^{n_{E}}$ and $F=\{F_b\}_{b=1}^{n_{F}}$, where at least one measurement operator is of rank strictly larger than 1, the maximum incompatibility cannot be achieved.
\end{lem}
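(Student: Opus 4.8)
The plan is to argue by contradiction: assume that a pair $(E,F)$ saturates the bound of Lemma~\ref{lem:most-incompatible}, i.e.~$\Upsilon_p(E,F) = 2^{1/p} d\sqrt{d-1} =: M$, while some operator, say $E_{a_0}$, has rank at least $2$. The idea is to refine both measurements into rank-1 measurements and show that the refinement of the rank-$\geq 2$ operator must strictly increase the measure, contradicting the fact that $M$ is already the global maximum.

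Concretely, I would write each operator as a sum of rank-1 pieces (via its spectral decomposition), obtaining rank-1 measurements $E'$ and $F'$ of which $E$ and $F$ are post-processings. Post-processing monotonicity (Section~\ref{sec:post-processing}) together with the universal upper bound of Lemma~\ref{lem:most-incompatible} yields the chain
\begin{align*}
M = \Upsilon_p(E,F) \leq \Upsilon_p(E,F') \leq \Upsilon_p(E',F') \leq M,
\end{align*}
so every inequality is an equality. Two consequences follow. First, $\Upsilon_p(E',F') = M$ means $E',F'$ are rank-1 measurements attaining the maximum, so Corollary~\ref{corollary:after_lem3} forces all overlaps between eigenvectors of $E'$ and $F'$ to equal $1/\sqrt{d}$, hence to be non-zero. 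Second, $\Upsilon_p(E,F') = \Upsilon_p(E',F')$ says precisely that every triangle inequality used in merging the rank-1 pieces of $E'$ back into $E$ is saturated; since each such inequality is of the correct sign, equality of the totals forces equality in each block. Isolating the block of the outcome $a_0$ and a fixed outcome $b'$ of $F'$ yields the multi-term equality
\begin{align*}
\Big\| \sum_k [\mu_k \ket{k}\bra{k}, F'_{b'}] \Big\|_p = \sum_k \big\| [\mu_k \ket{k}\bra{k}, F'_{b'}] \big\|_p,
\end{align*}
where $E_{a_0} = \sum_k \mu_k \ket{k}\bra{k}$ is the chosen rank-1 decomposition.

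By the immediate extension of Lemma~\ref{lem:tri_ineq_sat} to finitely many hermitian operators, such an equality forces a single hermitian $X$ with $\|X\|_q \leq 1$ to be simultaneously optimal for each term $i[\mu_k\ket{k}\bra{k}, F'_{b'}]$. I then select two eigenvectors $\ket{0},\ket{1}$ of $E_{a_0}$ with strictly positive weights $\mu_0 \geq \mu_1 > 0$ (possible since $\rank(E_{a_0}) \geq 2$) and let $\ket{\psi}$ be the eigenvector of the rank-1 operator $F'_{b'}$. For $A = i[\mu_0\ket{0}\bra{0}, F'_{b'}]$ and $B = i[\mu_1\ket{1}\bra{1}, F'_{b'}]$ the operator $X$ is a common optimizer, so Lemma~\ref{lem:tri_ineq_sat} gives $\|A+B\|_p = \|A\|_p + \|B\|_p$. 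But this is exactly the configuration ruled out by Lemma~\ref{lem:sharp_trig_ineq}: restricting to the at most three-dimensional span of $\ket{0},\ket{1},\ket{\psi}$ and factoring out the positive scalar $\mu_0$, we obtain $i[\ket{0}\bra{0},\ket{\psi}\bra{\psi}]$ and $i[\gamma\ket{1}\bra{1},\ket{\psi}\bra{\psi}]$ with $\gamma = \mu_1/\mu_0 \in (0,1]$, while $|\braket{0|\psi}| = |\braket{1|\psi}| = 1/\sqrt{d} \neq 0$ guarantees the hypothesis $ab \neq 0$. Lemma~\ref{lem:sharp_trig_ineq} then gives the strict inequality $\|A+B\|_p < \|A\|_p + \|B\|_p$, a contradiction.

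I expect the genuinely hard analytic content to be already packaged in Lemma~\ref{lem:sharp_trig_ineq}; the remaining delicate points are the bookkeeping that reduces global saturation to the single-block multi-term equality, and the verification that $A$ and $B$ are supported on a space of dimension at most three so that Lemma~\ref{lem:sharp_trig_ineq} applies verbatim. The role of Corollary~\ref{corollary:after_lem3} is purely to certify $ab \neq 0$, which is what lets us invoke the strict inequality rather than its boundary case.
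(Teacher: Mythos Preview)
Your proposal is correct and follows essentially the same approach as the paper: both argue by contradiction, pass to rank-1 refinements via spectral decompositions, invoke Corollary~\ref{corollary:after_lem3} to secure the non-vanishing overlaps, and then use Lemma~\ref{lem:sharp_trig_ineq} on two rank-1 pieces of the offending operator against a single rank-1 operator of the other measurement to obtain a strict inequality. The only organisational difference is that the paper introduces an intermediate measurement $\widetilde{G}$ (merging just the first two rank-1 pieces) and derives the strict drop $\Upsilon_p(\widetilde{G},H) < \Upsilon_p(G,H)$ directly, whereas you first establish the full equality chain $M = \Upsilon_p(E,F) = \Upsilon_p(E,F') = \Upsilon_p(E',F')$ and then argue block-wise saturation via a common optimizer; both routes are equivalent and equally clean. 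One cosmetic point: when you factor out the scalar to match the normalisation in Lemma~\ref{lem:sharp_trig_ineq}, you should factor out $\mu_0$ times the trace of $F'_{b'}$ (not just $\mu_0$), since $F'_{b'}$ need not be a projector---this does not affect the argument.
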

\begin{proof}
Let us first write the measurement operators in their spectral decomposition:
\begin{align*}
    E_a &= \sum_{j=1}^{\rank(E_a)} \lambda_{a j} \ket{e_{aj}} \bra{e_{aj}},\\
    F_b &= \sum_{k=1}^{\rank(F_b)} \mu_{b k} \ket{f_{bk}} \bra{f_{bk}}.
\end{align*}
Without loss of generality we assume that $\rank(E_1) \geq 2$. 
Let us now assume that $E$ and $F$ achieve the maximal incompatiblity and show that this leads to a contradiction.
If we decompose all the measurement operators to be rank-1, we obtain:
\begin{align*}
    G_{aj} &= \lambda_{a j} \ket{e_{aj}} \bra{e_{aj}},\\
    H_{bk} &= \mu_{b k} \ket{f_{bk}} \bra{f_{bk}},
\end{align*}
Due to post-processing monotonicity this new pair of measurements must still achieve the maximum incompatibility. This means that all the overlaps must be equal: $| \braket{e_{aj}|f_{bk}} | = \frac{1}{\sqrt{d}}$.
Let us now apply a post-processing to the $G$ measurement in which we combine the first two elements of $G$ and leave the rest unchanged i.e.~$\widetilde{G}_{11} = G_{11} + G_{12}$.
We will now analyse how this affects the incompatibility achieved by these two operators with an arbitrary operator from the $H$ measurement, say $H_{11}$.
Similarly to Lemma~\ref{lem:sharp_trig_ineq} we have three rank-1 operators: $G_{11}$, $G_{12}$ and $H_{11}$, so without loss of generality we can assume that they act on $\mathbb{C}^3$. 
The operators $G_{11}$ and $G_{12}$ are orthogonal, since they arise from the spectral decomposition of $E_{1}$. 
Finally, $H_{11}$ is not proportional to either $G_{11}$ or $G_{12}$, because of the overlap condition. 
Therefore, all the conditions of Lemma~\ref{lem:sharp_trig_ineq} are satisfied and we conclude that
\begin{align*}
    \|[\widetilde{G}_{11},H_{11}] \|_p < \|[G_{11},H_{11}] \|_p + \|[G_{12},H_{11}] \|_p.
\end{align*}
This implies that the value of our measure has necessarily decreased as $G$ got post-processed to $\widetilde{G}$. Since we can still obtain the original measurements $E$ and $F$ by applying further post-processing to $\widetilde{G}$ and $H$, this contradicts the initial assumption that $E$ and $F$ were maximally incompatible.
\end{proof}
\noindent Finally, combining Corollary~\ref{corollary:after_lem3} and Lemma~\ref{lem:app_lem_sharp_trig_ineq} leads to the following conclusion.
\begin{cor}
\label{Corollary:most_incompatible_measurements}
Let E and F be measurements on $\mathbb{C}^d$, such that $\Upsilon_p(E,F) = 2^\frac{1}{p} d\sqrt{d-1}$.
Then, all the measurements operators of $E$ and $F$ are rank-1 and their overlaps are equal to $\frac{1}{\sqrt{d}}$.
\end{cor}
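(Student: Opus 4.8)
The plan is simply to assemble the two preceding results, since all the substantive work has already been carried out. I would begin by noting that Lemma~\ref{lem:most-incompatible} identifies $2^{\frac{1}{p}} d \sqrt{d - 1}$ as the maximal attainable value of $\Upsilon_p$, so the hypothesis $\Upsilon_p(E, F) = 2^{\frac{1}{p}} d \sqrt{d - 1}$ is precisely the statement that $E$ and $F$ are maximally incompatible.

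The first step is to establish that every measurement operator is rank-1. Here I would invoke Lemma~\ref{lem:app_lem_sharp_trig_ineq} in its contrapositive form: that lemma asserts that whenever some operator of $E$ or $F$ has rank strictly greater than $1$, the maximum cannot be achieved. Since by hypothesis the maximum \emph{is} achieved, no operator can have rank exceeding $1$, so all operators of both $E$ and $F$ are rank-1.

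The second step then pins down the overlaps. Once we know both measurements are rank-1 and still achieve the maximal value, Corollary~\ref{corollary:after_lem3} applies directly and forces every overlap $|\braket{e_a | f_b}|$ to equal $\frac{1}{\sqrt{d}}$, completing the characterization. I do not anticipate any genuine obstacle: the delicate arguments---the strict concavity of $\widetilde{h}_p$ feeding Jensen's inequality in Corollary~\ref{corollary:after_lem3}, and the strict triangle inequality of Lemma~\ref{lem:sharp_trig_ineq} underlying Lemma~\ref{lem:app_lem_sharp_trig_ineq}---have already been discharged. The only point worth care is the logical ordering: the overlap condition in Corollary~\ref{corollary:after_lem3} is stated only for rank-1 measurements, so the rank-1 reduction of the first step must precede its application.
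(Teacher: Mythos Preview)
Your proposal is correct and matches the paper's own treatment exactly: the paper simply states that combining Corollary~\ref{corollary:after_lem3} and Lemma~\ref{lem:app_lem_sharp_trig_ineq} yields the result, and your write-up spells out precisely that combination in the right logical order.
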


\subsection{Composition of measurements}
Let us conclude this section by investigating the behavior of our measures under various types of compositions. We first describe different types of compositions using a single measurement as an example and then we apply it to pairs of measurements to see how the incompatibility measures are affected.

Given a measurement $\{ E_a \}_{a=1}^{n_{E}}$ acting on $\mathbb{C}^{d_{1}}$ a \textbf{trivial extension} corresponds to taking a tensor product with a finite-dimensional identity, i.e.~the resulting measurement $E'$ is given by:
\begin{align*}
    E'_a := E_a \otimes \mathbb{1}_{d_2}.
\end{align*}
Note that the number of outcomes of $E'$ is the same as that of $E$.

Given a pair of measurements $\{ E_a \}_{a=1}^{n_{E}}$ and $\{ \bar{E}_a \}_{a=1}^{n_{\bar{E}}}$ acting on $\mathbb{C}^{d_1}$ and $\mathbb{C}^{d_2}$, respectively, a \textbf{direct sum composition} yields a measurement acting on $\mathbb{C}^{d_1} \oplus \mathbb{C}^{d_2}$, whose measurement operators are given by:
\begin{align*}
    E'_{a} =
    \begin{cases}
    E_a \oplus 0 &\nbox{for} a \leq n_{E},\\
    0 \oplus \bar{E}_{a - n_{E}} &\nbox{for} a > n_{E},
    \end{cases}
\end{align*}
where $a \in \{1, 2, \ldots, n_{E} + n_{\bar{E}}\}$.

Given a pair of measurements $\{ E_a \}_{a=1}^{n_{E}}$ and $\{ \bar{E}_{\bar{a}} \}_{\bar{a}=1}^{n_{\bar{E}}}$ acting on $\mathbb{C}^{d_1}$ and $\mathbb{C}^{d_2}$, respectively, a \textbf{tensor product composition} yields a measurement acting on $\mathbb{C}^{d_1} \otimes \mathbb{C}^{d_2}$, whose measurement operators are given by:
\begin{align*}
    E'_{ a \bar{a} } := E_{a} \otimes \bar{E}_{\bar{a}}.
\end{align*}
Note that $E'$ has $n_{E} n_{\bar{E}}$ outcomes which are labelled by pairs $(a, \bar{a})$. Let us now investigate how these transformations affect the incompatibility measures.

In the case of a \textbf{trivial extension} the measurements $\{ E_a \}_{a=1}^{n_{E}}$ and $\{ F_b \}_{b=1}^{n_{F}}$ acting on $\mathbb{C}^{d_1}$
get transformed to $\{ E'_a \}_{a=1}^{n_{E}}$ and $\{ F'_b \}_{b=1}^{n_{F}}$, where
\begin{align*}
    E'_a := E_a \otimes \mathbb{1}_{d_2} \nbox{and} F'_{b} := F_b \otimes \mathbb{1}_{d_2}.
\end{align*}
Since
\begin{align*}
    [E_a \otimes \mathbb{1}_{d_2},F_b \otimes \mathbb{1}_{d_2}] = [E_a,F_b] \otimes \mathbb{1}_{d_2},
\end{align*}
we immediately conclude that
\begin{align*}
    \Upsilon_p (E', F') &= \sum_{a b} \||[E_a \otimes \mathbb{1}_{d_2},F_b \otimes \mathbb{1}_{d_2}]\|_p\\
    &= \sum_{a b} \|[E_a,F_b]\|_p \cdot \|\mathbb{1}_{d_2}\|_p\\
    &= d_2^\frac{1}{p} \cdot \Upsilon_p (E, F).
\end{align*}
Clearly, such a trivial extension gives rise to a multiplicative factor in the value of the measure.

In the case of a \textbf{direct sum composition} we have measurements $\{ E_a \}_{a=1}^{n_{E}}$ and $\{ F_b \}_{b=1}^{n_{F}}$ acting on $\mathbb{C}^{d_1}$ and measurements $\{ \bar{E}_a \}_{a=1}^{n_{\bar{E}}}$ and $\{ \bar{F}_b \}_{b=1}^{n_{\bar{F}}}$ acting on $\mathbb{C}^{d_2}$. These two pairs get transformed into $E'$ and $F'$ given by:
\begin{align*}
    E'_{a} &=
        \begin{cases}
            E_a \oplus 0 &\nbox{for} a \leq n_{E},\\
            0 \oplus \bar{E}_{a - n_{E}} &\nbox{for} a > n_{E},
        \end{cases} \\
    F'_{b} &=
        \begin{cases}
            F_b \oplus 0 &\nbox{for} b \leq n_{F},\\
            0 \oplus \bar{F}_{b - n_{F}} &\nbox{for} a > n_{F}.
        \end{cases}
\end{align*}
To compute the commutator we need to consider four distinct cases depending on whether $a \leq n_{E}$ or $a > n_{E}$ and $b \leq n_{F}$ or $b > n_{F}$. It is clear that in two out of the four possible cases the commutator vanishes and we only get a non-trivial contribution if either (1) $a \leq n_{E}$ and $b \leq n_{F}$ or (2) $a > n_{E}$ and $b > n_{F}$. Therefore, we have
\begin{align*}
    \Upsilon_p (E',F') = 
    \sum_{a = 1}^{n_{E} + 
    n_{\bar{E}}} \sum_{b = 1}^{ n_{F} + 
    n_{\bar{F}} } \| [E'_{a},F'_{b}] \|_p =
   \sum_{a = 1}^{n_{E}} \sum_{b = 1}^{n_{F}} \| [E_a,F_b] \|_p +
    \sum_{a = 1}^{n_{\bar{E}}} \sum_{b = 1}^{n_{\bar{F}}} \| [\bar{E}_a,\bar{F}_b] \|_p =
    \Upsilon_p (E,F) + \Upsilon_p (\bar{E}, \bar{F}),
\end{align*}
which means that our measures are additive under direct sum composition.

Finally, in the case of a \textbf{tensor product composition} we have measurements $\{ E_a \}_{a=1}^{n_{E}}$ and $\{ F_b \}_{b=1}^{n_{F}}$ acting on $\mathbb{C}^{d_1}$ and $\{ \bar{E}_{\bar{a}} \}_{\bar{a}=1}^{n_{\bar{E}}}$ and $\{ \bar{F}_{\bar{b}} \}_{\bar{b}=1}^{n_{\bar{F}}}$ acting on $\mathbb{C}^{d_2}$. These two pairs get transformed into $E'$ and $F'$ given by
\begin{align*}
    E'_{ a \bar{a} } := E_a \otimes \bar{E}_{\bar{a}} \nbox{and} F'_{b \bar{b}} := F_b \otimes \bar{F}_{\bar{b}}.
\end{align*}
Note that the measurement $E'$ and $F'$ have $n_{E} n_{\bar{E}}$ and $n_{F} n_{\bar{F}}$ outcomes, respectively. 
Since the commutator $[ E'_{a \bar{a}}, F'_{b \bar{b}} ]$ cannot be expressed through the original commutators $[ E_{a}, F_{b} ]$ and $[ \bar{E}_{\bar{a}}, \bar{F}_{\bar{b}} ]$, it is hard to study the behavior of the measure for general quantum measurements.
Under the assumption that the original measurements are rank-1, which implies that both $E'$ and $F'$ are rank-1, the value of $\Upsilon_{p} (E', F')$ can be written as a relatively simple function of overlaps and traces, but this function cannot be expressed in terms of $\Upsilon_{p} (E, F)$ and $\Upsilon_{p} (\bar{E}, \bar{F})$ alone.
To see this suppose that all the measurements are rank-1 projective and the overlaps between $E$ and $F$ are denoted by $c_{ab}$ and similarly the overlaps between $\bar{E}$ and $\bar{F}$ are denoted by $\bar{c}_{\bar{a} \bar{b}}$.
Then, it is easy to check that
\begin{align*}
    \Upsilon_p (E', F') = 2^\frac{1}{p} \sum_{a \bar{a} b \bar{b} } c_{ab} \bar{c}_{\bar{a}\bar{b}} \sqrt{1 - (c_{ab} \bar{c}_{\bar{a}\bar{b}})^2}.
\end{align*}
Clearly, it is the term under the square root which prevents us from factorising the entire expression.
A special case where this expression can be evaluated analytically is when $E$ and $F$ correspond to MUBs in dimension $d_{1}$, while $\bar{E}$ and $\bar{F}$ correspond to MUBs in dimension $d_{2}$. Then,
\begin{align*}
    \Upsilon_p (E', F') = 2^\frac{1}{p} \cdot (d_1 d_2)^2 \sqrt{\frac{1}{d_1 d_2} \left( 1 - \frac{1}{d_1 d_2} \right)} = 2^\frac{1}{p} \cdot d_1 d_2 \sqrt{d_1 d_2 - 1}
\end{align*}
and the correctness of the final value results from the fact that $E'$ and $F'$ constitute MUBs in dimension $d = d_{1} d_{2}$.

\section{Comparison with existing results}
\label{sec:comparison_exist_results}
In this section we show how our incompatibility measures can be linked to some existing results, namely: a robustness-based incompatibility measure, QRAC performance and entropic uncertainty relations.

\subsection{Comparison with a robustness-based incompatibility measure}
Incompatibility measures based on robustness to noise are widely used in the literature, but only one of them can be proven to be optimized by MUBs. Following the notation from Ref.~\cite{DFK19} incompatibility generalized robustness of a pair of measurements $\{E_{a}\}_{a = 1}^{n_{E}}$ and $\{F_{b}\}_{b = 1}^{n_{F}}$ is defined through the following semi-definite program:
\begin{align*}
    \eta^{g} :=
    \begin{cases}
        \max\limits_{\eta, \{ G_{ab} \}_{ab}} \; &\eta\\
        \hspace{0.4cm} \text{s.t.} \; &G_{ab} \geq 0,\\
        &\sum_{b} G_{ab} \geq \eta E_a,\\
        &\sum_a G_{ab} \geq \eta F_b,\\
        &\sum_{ab} G_{ab} = \mathbb{1}.
    \end{cases}
\end{align*}
Moreover, it has been proven in Ref.~\cite{DFK19} that for any pair of measurements acting on $\mathbb{C}^{d}$, we have
\begin{align*}
    \eta^{g} \geq \frac{1}{2} \left( 1 + \frac{1}{\sqrt{d}} \right).
\end{align*}
Note that due to the definition of robustness-based measures saturating this lower bound characterizes the most incompatible measurements. An upper bound on $\eta^{g}$ can be obtained by providing a solution to the dual problem given by:
\begin{align*}
    \begin{cases}
        \min\limits_{N,\{X_a\}_a, \{Y_b\}_b} \; &\tr(N)\\
        \hspace{0.4cm} \text{s.t.} \; &N=N^\dagger,\\
        &N \geq X_a + Y_b,\\
        &X_a \geq 0, \; Y_b \geq 0,\\
        &\sum_a \tr(X_a E_a ) + \sum_b \tr (Y_b F_b) \geq 1.
    \end{cases}
\end{align*}
We are now ready to show that any pair of measurements that achieves the  maximal incompatibility according to the commutation-based measures achieves the lowest possible value of $\eta^{g}$. Note that a related statement which applies only to rank-1 measurements acting on a qubit can be found in Section 4.2 of Ref.~\cite{DFK19}.

Corollary~\ref{Corollary:most_incompatible_measurements} tells us that a pair of measurements which is maximally incompatible with respect to the commutation-based measures must consist of rank-1 measurements whose overlaps are equal to $\frac{1}{\sqrt{d}}$. Then, we consider the following assignment of the dual variables:
\begin{align*}
    X_a &:= \frac{1}{2d} \cdot \frac{E_a}{\tr (E_a)},\\
    Y_b &:= \frac{1}{2d} \cdot \frac{F_b}{\tr (F_b)}.
\end{align*}
Using the fact that for rank-1 operators the square of the trace equals the trace of the square, we can immediately verify that
\begin{align*}
    \frac{1}{2d} \bigg( \sum_a  \frac{\tr (E_a^2)}{\tr (E_a)} + \sum_b \frac{\tr (F_b^2)}{\tr (F_b)} \bigg) = \frac{1}{2d} \bigg( \sum_a  \tr (E_a) + \sum_b \tr (F_b) \bigg) = 1.
\end{align*}
Finally, we set $N := \frac{1}{2d} ( 1 + \frac{1}{\sqrt{d}}) \mathbb{1}$, so to prove the operator inequality $N \geq X_{a} + Y_{b}$ it suffices to show that
\begin{align*}
\| X_a + Y_b \|_\infty \leq \frac{1}{2d} ( 1 + \frac{1}{\sqrt{d}})
\end{align*}
for all $a, b$. However, since $X_{a} + Y_{b}$ is always a rank-2 operator we can use Lemma~\ref{lem:spec_commutator} to find its eigenvalues and explicitly check that the condition stated above is satisfied. It is then easy to see that the resulting upper bound on $\eta^{g}$ coincides with the universal lower bound, which concludes the proof.

It is natural to ask whether the implication holds in the other direction, i.e.~whether every pair of measurements which is maximally incompatible according to $\eta^{g}$ is also maximally incompatible with respect to commutation-based measures. This, however, we cannot answer because we do not have a full characterization of the set of maximally incompatible pairs with respect to $\eta^{g}$. 

\subsection{Relation to the QRAC performance}
\label{sec:QRAC_performance}
In this section we investigate the relation between commutation-based incompatibility measures and the QRAC performance. More specifically, our goal is to show that if a pair of measurements is capable of producing a nearly-optimal performance in the standard QRAC, then it must be close to maximally incompatible with respect to the commutation-based measures. Since the relation between incompatibility and QRAC performance for the most general measurements turns out to be rather difficult to study, we restrict our attention to the case of rank-1 projective measurements.

Consider the standard QRAC in which two classical dits are encoded in a single qudit. Since our goal is to capture the usefulness of measurements in a QRAC, we assume that the preparations are chosen optimally. Then, the average success probability of a QRAC (see e.g.~Ref.~\cite{def_prob_qrac} for a definition), denoted by $\pave$, is given by
\begin{equation}
\label{eq:pave}
\pave = \frac{1}{2 d^{2}} \sum_{a b} \| E_a + F_b \|_\infty = \frac{1}{2} + \frac{1}{2 d^{2}} \sum_{ab} c_{ab},
\end{equation}
where the final equality results from the fact that for rank-1 projective measurements the norm can be calculated explicitly and is a simple function of the overlaps, which we denote by $c_{ab}$~\cite{PhysRevLett.121.050501}. Therefore, knowing how useful a pair of rank-1 measurements is in a QRAC is equivalent to knowing the value of the sum $\sum_{ab} c_{ab}$. Let us now show that this is sufficient to prove a non-trivial lower bound on the commutation-based incompatibility measures.

The argument proceeds in two steps. We first show that given the value of $\sum_{ab} c_{ab}$, we can bound the variance of the overlaps and then we use the fact that the function appearing in the commutation-based measure is bounded and concave to relate the variation of the function to the variation of the arguments.

While the overlaps $c_{ab}$ are labelled my two integers in the range $\{1, 2, \ldots, d\}$ in our argument we ignore this structure and we think of them as a set of $d^{2}$ numbers, so we will simply write $c_{j}$, where the summation goes over $j \in \{1, 2, \ldots, d^{2}\}$. Let us first define the average overlap:
\begin{align*}
    \bar{c} := \frac{1}{d^2} \sum_{j} c_j.
\end{align*}
The standard relation between vector $p$-norms implies that
\begin{equation}
\label{eq:norm-relation}
\sum_{j} |c_j - \bar{c}| \leq d \sqrt{ \sum_{j} ( c_j - \bar{c} )^{2} }
\end{equation}
and note that the expression under the square root can be evaluated to give:
\begin{align*}
\sum_{j} (c_j - \bar{c})^2 = \sum_{j} c_j^2 - 2 \bar{c} \sum_{j} c_j + d^2 \bar{c}^2 = \sum_{j} c_j^2 - d^2 \bar{c}^2 = d - d^2 \bar{c}^2,
\end{align*}
where we have used the fact that the overlaps satisfy $\sum_{j} c_{j}^{2} = d$. Therefore, knowing the average overlap $\bar{c}$ gives us an upper bound on the sum $\sum_{j} |c_j - \bar{c}|$.

According to Eq.~\eqref{eq:overlap_funtion_c_a_b} in the case of rank-1 projective measurements the incompatibility measure is given by:
\begin{align*}
\Upsilon_p (E,F) = \sum_{j} h_{p}(c_{j}),
\end{align*}
which can be rewritten as
\begin{equation}
\label{eq:Upsilon-rewritten}
\Upsilon_p (E,F) = d^{2} h_{p}( \bar{c} ) + \sum_{j} h_{p}(c_{j}) - h_p(\bar{c}).
\end{equation}
Our goal now is to provide a lower bound on the second term and we do this by thinking of $c_{j}$ as a deviation from $\bar{c}$. Since $h_{p}$ is a concave function we can obtain an upper bound by considering its derivative:
\begin{equation}
\label{eq:upper-bound}
h_{p}( c_{j} ) - h_{p}(\bar{c}) \leq h_{p}'( \bar{c} ) ( c_{j} - \bar{c} ),
\end{equation}
where $h_{p}'( \bar{c} )$ is the derivative of $h_{p}$ evaluated at $\bar{c}$.

To bound the same expression from below we use the fact that the domain is a bounded interval, namely $[0, 1]$. Therefore, it suffices to construct straight lines joining the point $\big( \bar{c}, h_{p}( \bar{c} ) \big)$ with the two endpoints, namely: $\big(0, h_{p}(0) \big)$ and $\big(1, h_{p}(1) \big)$. This implies that
\begin{equation}
\label{eq:lower-bound}
h_{p}( c_{j} ) - h_{p}(\bar{c}) \geq
\begin{cases}
\frac{ - h_{p}(\bar{c}) }{ 1 - \bar{c} } ( c_{j} - \bar{c} ) &\nbox{if} c_{j} \geq \bar{c},\\
\frac{ h_{p}(\bar{c}) }{\bar{c}} ( c_{j} - \bar{c} ) &\nbox{if} c_{j} < \bar{c}.
\end{cases}
\end{equation}
Geometrical justification of the bounds given in Eqs.~\eqref{eq:upper-bound} and~\eqref{eq:lower-bound} can be found in Fig.~\ref{fig:geometrical-justification}.
\begin{figure}[h!]
    \centering
    \includegraphics{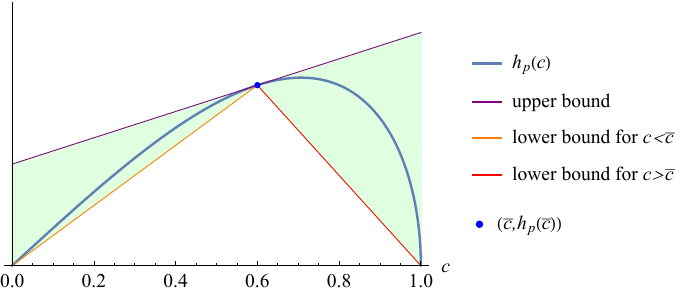}
    \caption{For every value of $\bar{c}$ the function $h_{p}(c)$ is bounded from above by its derivative at $c = \bar{c}$ and from below by straight lines connecting the point $\big( \bar{c}, h_{p}(\bar{c}) \big)$ to the endpoints, namely $(0, 0)$ and $(1, 0)$.}
    \label{fig:geometrical-justification}
\end{figure}

Combining the two cases of Eq.~\eqref{eq:lower-bound} into a single bound leads to
\begin{align*}
h_{p}( c_{j} ) - h_{p}(\bar{c}) \geq \max \bigg\{ \frac{h_{p}(\bar{c}) }{ 1 - \bar{c} }, \frac{h_{p}(\bar{c}) }{\bar{c}} \bigg\} \cdot | c_{j} - \bar{c} |
\end{align*}
and note that both terms appearing inside the maximum are non-negative. Together with Eq.~\eqref{eq:upper-bound} this implies that
\begin{align*}
    \big| h_{p}( c_{j} ) - h_{p}(\bar{c}) \big| \leq \alpha( \bar{c} ) \cdot |c_j - \bar{c}|
    \nbox{for} \alpha( \bar{c} ) := \max \bigg\{ \frac{h_{p}(\bar{c}) }{ 1 - \bar{c} }, \frac{h_{p}(\bar{c}) }{\bar{c}}, |h_{p}'( \bar{c} )| \bigg\}.
\end{align*}
Plugging this bound into Eq.~\eqref{eq:Upsilon-rewritten} gives
\begin{align*}
\Upsilon_p (E,F) &= d^{2} h_{p}( \bar{c} ) + \sum_{j} h_{p}(c_{j}) - h_p(\bar{c}) \geq d^{2} h_{p}( \bar{c} ) - \sum_{j} \big| h_{p}(c_{j}) - h_p(\bar{c}) \big|\\
&\geq d^{2} h_{p}( \bar{c} ) - \alpha( \bar{c} ) \sum_{j} | c_{j} - \bar{c} | \geq d^{2} h_{p}( \bar{c} ) - \alpha( \bar{c} ) d \sqrt{ d - d^{2} \bar{c}^{2} } =: f(p, \bar{c},d),
\end{align*}
where the last inequality comes from the norm relation given in Eq.~\eqref{eq:norm-relation}.

Thanks to Eq.~\eqref{eq:pave} the average overlap is directly linked to the QRAC performance: $\bar{c} = 2 \pave - 1$. Since the relevant range of $\pave$ is $[ \frac{1}{2} + \frac{1}{2d}, \frac{1}{2} + \frac{1}{2 \sqrt{d}} ]$, the corresponding range of $\bar{c}$ is $[ 1/d, 1/\sqrt{d}]$.

Let us first show that our analysis is tight in the case of the optimal QRAC performance. If $\pave = \frac{1}{2} + \frac{1}{2 \sqrt{d}}$, then $\bar{c} = \frac{1}{\sqrt{d}}$ and since $\alpha(\bar{c})$ is finite we immediately deduce that
\begin{equation*}
\Upsilon_p (E,F) \geq d^{2} h_{p}( \bar{c} ) = 2^\frac{1}{p} d\sqrt{d-1},
\end{equation*}
which coincides with the maximal value of incompatibility in dimension $d$.

For suboptimal QRAC performance an analytical bound is easy to compute. In Fig.~\ref{fig:qrac-incompatibility} we present plots of $f(1, \bar{c}, 2)$ and $f(1, \bar{c}, 3)$ in the relevant range of $\bar{c}$.
\begin{figure}[h!]
    \centering
    \includegraphics{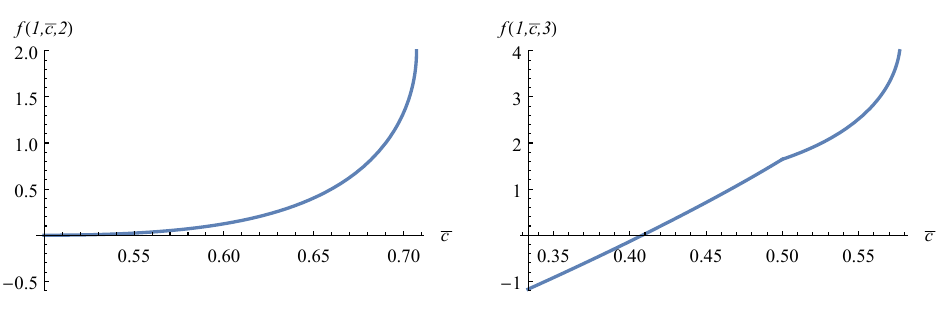}
    \caption{Lower bounds on the commutation-based incompatibility measure as a function of the average overlap for $d = 2$ (left) and $d = 3$ (right). Note that for $d = 2$ we obtain a non-trivial bound for the entire range of $\bar{c}$, but this is not the case for $d = 3$. As showed before in both cases the optimal QRAC performance certifies the maximal value of incompatibility.}
    \label{fig:qrac-incompatibility}
\end{figure}

\subsection{Relation to entropic uncertainty}
In this section we take a closer look at the relation between incompatibility measures and entropic uncertainty relations. More specifically, we will calculate an explicit upper and lower bound on incompatibility for rank-1 projective measurements as a function of uncertainty. This will allow us to conclude that certain regions in the incompatibility--uncertainty space are forbidden.

Let $E$ and $F$ be rank-1 projective measurements acting on $\mathbb{C}^{d}$, i.e.~for $a, b \in \{1, 2, \ldots, d\}$ we have
\begin{align*}
    E_a &= \ket{ e_{a} } \bra{ e_{a} },\\
    F_b &= \ket{ f_{b} } \bra{ f_{b} },
\end{align*}
where $\{ \ket{ e_{a} } \}_{a}$ and $\{ \ket{ f_{b} } \}_{b}$ are orthonormal bases on $\mathbb{C}^{d}$. Let $t_{ab} := | \braket{ e_{a} | f_{b} }|^2$ be the square of the overlap. The entropic uncertainty relation introduced by Maassen and Uffink in Ref.~\cite{PhysRevLett.60.1103} states that:
\begin{align*}
    H(E) + &H(F) \geq - \log \tau,\\
    \tau &:= \max_{a b} t_{a b},
\end{align*}
where $H(E)$ and $H(F)$ are the Shannon entropies of the probability distributions arising when measurements $E$ and $F$ are performed on some fixed quantum state. When $\tau = 1$, the right-hand side vanishes and the inequality becomes trivial. This happens only when the two bases share a vector and then no uncertainty can be guaranteed. The other extreme corresponds to $\tau = \frac{1}{d}$, which occurs only when the bases are mutually unbiased and then the right-hand side evaluates to $\log d$.

As the right-hand side does not depend on the state, this is an example of a state-independent uncertainty relation (it depends only on the measurements performed just like our incompatibility measures). From now on we will treat the right-hand side of the uncertainty relation (namely $- \log \tau$) as a measure of uncertainty. To explore the interplay between incompatibility and uncertainty, we will derive upper and lower bounds on incompatibility as a function of uncertainty.

Therefore, our task is to maximize/minimize incompatibility measures for a fixed value of $\tau$. Since both uncertainty and incompatibility are invariant under permutations of outcomes, we can without loss of generality assume that $\tau = t_{11}$ and the relevant range of $\tau$ is given by $[ \frac{1}{d}, 1 ]$. Note that if we define the matrix $T = (t_{ab})$:
\begin{align*}
    T = \begin{bmatrix}
    t_{11} & t_{21} & \cdots & t_{d1}\\
    t_{12} & t_{22} & & \vdots\\
    \vdots & & \ddots & \\
    t_{1d} & \cdots & & t_{dd}
    \end{bmatrix},
\end{align*}
then $T$ is a bistochastic matrix.
As explained in Lemma~\ref{lem:most-incompatible} for rank-1 projective measurements the incompatibility measure can be written as
\begin{align}
\label{eq:split-into-classes}
    \Upsilon_p(E,F) = \sum_{a,b=1}^d \widetilde{h}_p(t_{ab}) = \widetilde{h}_p(\tau) + \sum_{a =2}^d \widetilde{h}_p(t_{a1}) + \sum_{b =2}^d \widetilde{h}_p(t_{1b}) + \sum_{a,b =2}^d \widetilde{h}_p(t_{ab}).
\end{align}
Our goal is to derive bounds on this quantity which depend only on $\tau$ and $d$ and to do so we will use some properties of concave and Schur-concave functions introduced in Sec.~\ref{sec:concave_functions}.

To find an upper bound let us use the fact that each row and each column add up to 1, while all the entries add up to $d$. Applying Jensen's inequality independently to each of the three sums leads to
\begin{align}
\label{eq:ucp_upper_bound}
    \Upsilon_p(E,F) &\leq \widetilde{h}_p(\tau) + (d-1) \widetilde{h}_p\left(\frac{1}{d-1} \sum_{a =2}^d t_{a1}\right) + (d-1) \widetilde{h}_p\left(\frac{1}{d-1} \sum_{b =2}^d t_{1b}\right) + (d-1)^2 \widetilde{h}_p\left(\frac{1}{(d-1)^2} \sum_{a,b =2}^d t_{ab}\right) \nonumber\\
    &= \widetilde{h}_p(\tau) + 2 (d-1) \widetilde{h}_p\left(\frac{1 - \tau}{d-1}\right) + (d-1)^2 \widetilde{h}_p\left(\frac{d-2+\tau}{(d-1)^2}\right) \nonumber\\
    &= 2^\frac{1}{p} \left[ \sqrt{\tau(1-\tau)} + 2\sqrt{(1-\tau)(d-2+\tau)} + \sqrt{(d-2+\tau)(d^2 -3d+3-\tau)} \right].
\end{align}
It is easy to see that the resulting bound corresponds to simply maximising the right-hand side of Eq.~\eqref{eq:split-into-classes} over bistochastic matrices satisfying $t_{ab} \leq \tau$ for all $a, b$.

Note that the resulting upper bound is tight for the extreme values of $\tau$. For $\tau = \frac{1}{d}$ we obtain the value corresponding to $d$-dimensional MUBs, while for $\tau = 1$ we obtain the value corresponding to $(d - 1)$-dimensional MUBs. In both scenarios these upper bounds can be saturated by an appropriate choice of rank-1 projective measurements. For intermediate values of $\tau$ and $d \geq 3$ we should not expect the bound to be tight, because unistochastic matrices form a proper subset of bistochastic matrices. Finally, let us show that this upper bound is monotonically decreasing in $\tau$.

\begin{lem}
The upper bound given in Eq.~\eqref{eq:ucp_upper_bound} is a monotonically decreasing function of $\tau$ in the interval $[ \frac{1}{d}, 1 ]$.
\end{lem}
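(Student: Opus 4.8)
The plan is to discard the positive constant $2^{\frac1p}$ and show that the bracketed function
\[
g(\tau) := \sqrt{\tau(1-\tau)} + 2\sqrt{(1-\tau)(d-2+\tau)} + \sqrt{(d-2+\tau)(d^2-3d+3-\tau)}
\]
is non-increasing on $[\frac1d,1]$. First I would establish that $g$ is \emph{concave} on this interval. Each summand has the form $\sqrt{P_i(\tau)}$, where $P_i$ is a product of two affine functions whose product is a downward-opening parabola (leading coefficient $-1$). A quick look at the roots shows that each $P_i$ is non-negative on $[\frac1d,1]$ for $d\ge 2$, the only boundary zeros occurring at $\tau=1$. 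Since each $P_i$ is concave and the square root is concave and non-decreasing on $[0,\infty[$, the composition $\sqrt{P_i}$ is concave, and hence so is the sum $g$.

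The payoff of concavity is that $g'$ is non-increasing on the interval, so it suffices to check the sign of $g'$ at the left endpoint; in fact I expect it to vanish there. Differentiating term by term and evaluating at $\tau=\frac1d$, the factorisations $d^2-3d+2=(d-1)(d-2)$ and $d^3-4d^2+5d-2=(d-1)^2(d-2)$ make the algebra collapse: the derivatives of the three terms equal $\frac{d-2}{2\sqrt{d-1}}$, $-\frac{d-2}{2\sqrt{d-1}}$ and $\frac{d-2}{2\sqrt{d-1}}$, respectively. Weighted by $1,2,1$ they cancel exactly, giving $g'(\frac1d)=0$. Concavity then yields $g'(\tau)\le g'(\frac1d)=0$ for every $\tau\in[\frac1d,1]$, which is precisely the claimed monotonicity; continuity at $\tau=1$ closes the endpoint.

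The main obstacle is purely computational rather than conceptual: I must confirm that all three radicands are strictly positive at $\tau=\frac1d$ (so that $g$ is genuinely differentiable there and the endpoint derivative is finite), and then carry the endpoint evaluation through without error. The cancellation at $\tau=\frac1d$ is striking but not accidental: $\tau=\frac1d$ corresponds to mutually unbiased bases, where this upper bound attains the global maximum $2^{\frac1p}d\sqrt{d-1}$ identified in Lemma~\ref{lem:most-incompatible}, so a vanishing derivative at that point is exactly what one should anticipate. Once $g'(\frac1d)=0$ is in hand, the concavity argument finishes the proof with no further case analysis.
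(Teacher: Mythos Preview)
Your argument is correct and the computations check out: each radicand is a downward parabola that stays non-negative on $[\tfrac{1}{d},1]$, so the sum $g$ is concave there; all three radicands are strictly positive at $\tau=\tfrac{1}{d}$, hence $g$ is differentiable at the left endpoint; and the three derivative contributions $\tfrac{d-2}{2\sqrt{d-1}}$, $-\tfrac{d-2}{2\sqrt{d-1}}$, $\tfrac{d-2}{2\sqrt{d-1}}$ with weights $1,2,1$ indeed cancel, so $g'(\tfrac{1}{d})=0$ and concavity forces $g'\le 0$ on the open interval, with continuity handling $\tau=1$.

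This is, however, a different route from the paper's. The paper does not differentiate the closed-form expression at all. Instead it observes that the upper bound equals $\sum_{ab}\widetilde{h}_p(t_{ab})$ evaluated at a specific bistochastic matrix $T_{\textnormal{opt}}(\tau)$ whose entries are $\tau$, $t_r(\tau)=\tfrac{1-\tau}{d-1}$ and $t_s(\tau)=\tfrac{d-2+\tau}{(d-1)^2}$; it then shows $\tau\ge t_s\ge t_r$ throughout the interval, uses this fixed ordering to verify the majorization $T_{\textnormal{opt}}(\tau)\succ T_{\textnormal{opt}}(\tau')$ whenever $\tau\ge\tau'$, and concludes via Schur-concavity of $\sum\widetilde{h}_p$. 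Your approach is more elementary---pure single-variable calculus with one clean endpoint evaluation---and makes transparent why $\tau=\tfrac{1}{d}$ is the maximiser. The paper's majorization argument avoids the somewhat delicate algebra of the endpoint derivative and instead exploits the structure of the optimal bistochastic matrix, which ties the monotonicity back to the Schur-concavity machinery already set up in Section~\ref{sec:concave_functions}; it would also extend verbatim if $\widetilde{h}_p$ were replaced by any other concave function, whereas your endpoint computation is specific to this $\widetilde{h}_p$.
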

\begin{proof}
We will start by writing out the optimal bistochastic matrix  $T_{\textnormal{opt}}(\tau)$:
\begin{align*}
    T_{\textnormal{opt}}(\tau) := \begin{bmatrix}
    \tau & t_{r} & \cdots & t_{r}\\
    t_{r} & t_{s} & & \vdots\\
    \vdots & & \ddots & \\
    t_{r} & \cdots & & t_{s}
    \end{bmatrix},
\end{align*}
where $t_r(\tau) = \frac{1-\tau}{d-1}$ and $t_s(\tau) = \frac{d-2+\tau}{(d-1)^2}$. It is easy to check that for the extreme values we have $t_r(\frac{1}{d}) = t_s(\frac{1}{d}) = \frac{1}{d}$ and $t_r(1) = 0$, $t_s(1) = \frac{1}{d-1}$. To see that in the interval $[\frac{1}{d},1]$ we have $\tau \geq t_s \geq t_r$, it suffices to note that all three expressions are of the form $a \tau + b$ and since they coincide at $\tau = \frac{1}{d}$, it suffices to compare the coefficients of the linear term. Since $\widetilde{h}_p(t_{ab})$ is a strictly concave function of $t_{ab}$, the sum $\sum_{ab} \widetilde{h}_p(t_{ab})$ is a Schur-concave function (if we interpret it as a function from $\mathbb{R}^{d^{2}}$ to $\mathbb{R}$). Hence, to finish the proof we must show that the matrices $T_{\textnormal{opt}}(\tau)$ interpreted as vectors in $\mathbb{R}^{d^{2}}$ satisfy majorization in the sense that for $\tau \geq \tau'$ we have $T_{\textnormal{opt}}(\tau)\succ T_{\textnormal{opt}} (\tau')$. Since the ordering of coefficients is fixed (thanks to the relation $\tau \geq t_s \geq t_r$ valid for all relevant values of $\tau$), the inequalities specified in Eq.~\eqref{eq:majorization-condition} can be explicitly checked. Intuitively, the difference between the left-hand side and the right-hand side of Eq.~\eqref{eq:majorization-condition} starts off at $0$ for $k = 1$, then it increases until $k = 2d - 1$ (due to $t_s(\tau) \geq t_s(\tau')$) and then it decreases (due to $t_r(\tau) \leq t_r(\tau')$) but it only reaches 0 at $k = d^{2}$ (because in both cases the total sum of the entries equals $d$).
\end{proof}
Now let us come back to Eq.~\eqref{eq:split-into-classes} and use concavity to derive a lower bound. To find a lower bound on the sum $\sum_{a=2}^d \widetilde{h}_p(t_{a1})$ we minimize this expression over vectors in $\mathbb{R}^{d-1}$ which satisfy certain linear constraints. The allowed vectors form a convex set whose extremal points can be characterized using Lemma~\ref{lem:extremal_points} (simply set $s = 1 - \tau$ and $t = \tau$). Since the minimum must be achieved at one of these extremal points and all these points give exactly the same value, we conclude that:
\begin{align*}
    \sum_{a=2}^d \widetilde{h}_p(t_{a1}) &\geq \sum_{j=1}^{m_{r}} \widetilde{h}_p(\tau) + \widetilde{h}_p(1- \tau - m_{r} \tau) + \sum_{j=m_{r}+2}^{d-1} \widetilde{h}_p(0) = m_{r} \widetilde{h}_p(\tau) + \widetilde{h}_p(1- \tau - m_{r} \tau),
\end{align*}
where $m_{r} := \lfloor \frac{1-\tau}{\tau}\rfloor$ is the multiplicity (note the convention that if $m_{r} = 0$, then the first sum is empty and does not contribute) and we have used the fact that $\widetilde{h}_p(0) = 0$. The same argument can be applied to the second sum in Eq.~\eqref{eq:split-into-classes} to give exactly the same lower bound. Analogous argument applied to the last sum (use Lemma~\ref{lem:extremal_points} with $s = d - 2 + \tau$ and $t = \tau$) gives
\begin{align*}
    \sum_{a,b =2}^d \widetilde{h}_p(t_{ab}) &\geq \sum_{j=1}^{m_{s}} \widetilde{h}_p(\tau) + \widetilde{h}_p(d-2+\tau - m_{s} \tau) + \sum_{j=m_{s}+2}^{(d-1)^2} \widetilde{h}_p(0)\\
    &= m_{s} \widetilde{h}_p(\tau) + \widetilde{h}_p(d-2+\tau - m_{s} \tau).
\end{align*}
where $m_{s} := \lfloor \frac{d-2+\tau}{\tau} \rfloor$ (and we adopt analogous convention for the case of $m_{s} = 0$). Hence, the final expression for the lower bound reads:
\begin{align*}
    \Upsilon_p(E,F) \geq (1 + 2 m_{r} + m_{s}) \widetilde{h}_p(\tau) + 2 \widetilde{h}_p(1-\tau - m_{r} \tau) + \widetilde{h}_p(d-2+\tau - m_{s} \tau).
\end{align*}
Again, the resulting bound is tight for the extreme values. For $\tau = \frac{1}{d}$ we obtain the value corresponding to $d$-dimensional MUBs (i.e.~the upper and lower bounds match), while for $\tau = 1$ the lower bound vanishes, which corresponds to the case of compatible measurements. Note that the lower bound is not a smooth function, it exhibits kinks whose number increases with $d$. Nevertheless, from numerical evidence it seems to be an increasing function of $\tau$. In the special case of $d = 2$ the upper and lower bounds coincide for all values of $\tau$, because specifying $\tau$ essentially determines the overlap matrix. Having derived an upper and lower bound we can plot a region in the incompatibility--uncertainty space which is certainly forbidden. An example corresponding to $d = 3$ and incompatibility measure $\Upsilon_{1}$ can be found in Fig.~\ref{fig:uncertainty_principle}.
\begin{figure}[h!]
    \centering
    \includegraphics{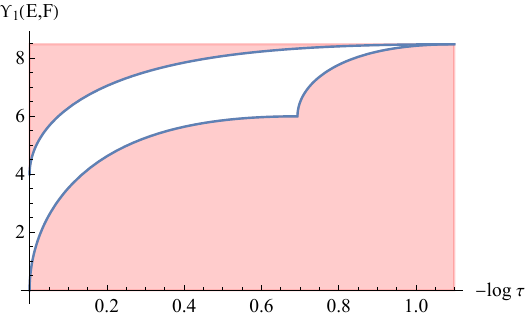}
    \caption{Region plot of the incompatibility--uncertainty space quantified by $\Upsilon_{1}$ and $- \log \tau$ for $d=3$ in the range $\tau \in [\frac{1}{3},1]$. The forbidden regions are marked in red.}
    \label{fig:uncertainty_principle}
\end{figure}
\section{Conclusions and open problems}
In this work we have introduced a new family of incompatibility measures based on non-commutativity. We showed that they are non-increasing under unitary operations and post-processing, but fail to do so under pre-processing. We have fully characterized the most incompatible pairs of measurements, which turn out to be a natural generalization of MUBs. We have also studied their behavior under different types of compositions.
To link our new measures with some existing results, we compared them with the generalized incompatibility robustness. We found that all pairs of measurements which are maximally incompatible according to non-commutativity measures are also maximally incompatible according to the generalized robustness measure. However, we do not know whether the reverse implication holds, as we do not have a full characterization of the maximally incompatible pairs for the robustness-based measure.
Finally, we investigated whether our incompatibility measures can be linked to QRAC performance and we found that a quantitative connection can be made for rank-1 projective measurements. More specifically, measurements exhibiting near-optimal QRAC performance must be close to achieving maximal incompatibility according to the non-commutativity-based measures. Similarly, we have shown that for rank-1 projective measurements our incompatibility measures can be related to entropic uncertainty relations. We have derived analytical upper and lower bounds on the incompatibility as a function of uncertainty. For the extremal cases our bounds are tight and we have given examples of measurements that saturate them. This allowed us to determine and visualise the forbidden regions in the incompatibility--uncertainty space.

Our work points at several open problems which we leave for future work. An important direction is to generalize our measures to more than two measurements (a feature that appears naturally in the robustness-based framework). Here, two potential approaches are to either define a symmetric function of pairwise incompatibilities or to find a suitable definition of a commutator for multiple operators. Another direction would be to extend the links to operational tasks, e.g.~could we prove a quantitative relation between non-commutativity-based measures and the generalized incompatibility robustness or could we extend the relation to QRAC performance for the most general class of quantum measurements? Finally, one could study the relation of our measures to other operational scenarios in which MUBs turn out to be relevant, e.g.~Bell nonlocality \cite{Kaniewski2019maximalnonlocality,Tavakolieabc3847} or the task of super-dense coding \cite{PhysRevA.88.062317,nayak2020rigidity}.

\section*{Acknowledgments}
We acknowledge fruitful discussions with Sébastien Designolle, Máté Farkas, Nicolás Gigena and Gabriel Pereira Alves.
The project ``Robust certification of quantum devices'' is carried out within the HOMING programme of the Foundation for Polish Science co-financed by the European Union under the European Regional Development Fund.

\providecommand{\noopsort}[1]{}\providecommand{\singleletter}[1]{#1}%
\end{document}